\newtheorem{theorem}{Theorem}
 \newtheorem{lemma}{Lemma}
  \newtheorem{proposition}{Proposition}
  \newtheorem{remark}{Remark}
\begin{document}

\title{  
Full Diversity  Space-Time Block Codes with Low-Complexity Partial Interference Cancellation Group Decoding}
\author{Wei~Zhang,~\IEEEmembership{Member,~IEEE,} ~Long~Shi,~\IEEEmembership{Student Member,~IEEE,}
        and ~Xiang-Gen~Xia,~\IEEEmembership{Fellow,~IEEE}

 \thanks{W.  Zhang and L. Shi are with School of Electrical Engineering and Telecommunications,   The University of New South Wales,
  Sydney, Australia (e-mail: \{w.zhang; long.shi\}@unsw.edu.au). Their work was supported in part by the Australian Research Council Discovery Project DP1094194.}%
   \thanks{X.-G. Xia is with Department of Electrical and Computer Engineering, University of Delaware, DE 19716, USA (e-mail: xxia@ee.udel.edu). His work was supported in part by the Air Force Office of Scientific
Research (AFOSR) under Grant No. FA9550-08-1-0219.
}
}

\maketitle

\begin{abstract}
Partial interference cancellation (PIC) group decoding proposed by Guo and Xia is an attractive low-complexity alternative to the optimal processing for multiple-input multiple-output (MIMO) wireless communications. It can well deal with the tradeoff among rate, diversity and complexity of space-time block codes (STBC). In this paper,  a systematic design of full-diversity STBC with low-complexity PIC group decoding is proposed. The proposed code design is featured as a group-orthogonal STBC by replacing every element of an Alamouti code matrix with an elementary matrix composed of multiple diagonal layers of coded symbols. With the PIC group decoding and a particular grouping scheme, the proposed STBC can achieve  full diversity, a rate of $(2M)/(M+2)$ and a low-complexity decoding for $M$ transmit antennas. Simulation results show that the proposed codes can achieve the full diversity with PIC group decoding while requiring half decoding complexity of the existing codes.

\end{abstract}

\begin{keywords}
Diversity techniques, space-time block codes, linear receiver, partial interference cancellation.
\end{keywords}

\section{Introduction}
\label{sec:intro}

Multiple-input multiple-out (MIMO) wireless communications have been witnessed to offer large gains in spectral efficiency and reliability \cite{Telatar,Naguib}. Efficient designs of signal transmission schemes include space-time (ST) codes over MIMO systems have been active areas of research over the past decade \cite{ZhangMag}. Orthogonal ST block code (OSTBC) is one of the most powerful ST code designs due to its simple low-complexity maximum-likelihood (ML) decoding  while achieving maximum diversity gain \cite{Alamouti, Tarokh98, Tarokh00, Lu}. However,  it is found that OSTBC has a low code rate that cannot be above $3/4$ symbols per channel use for more than two transmit antennas \cite{Wang}. To improve the code rate of the STBC, numerous code designs have been developed including quasi-orthogonal STBC  \cite{Jafar,Tirkkonen, Papadias,Tirkkonen2,Sharma,SuXia,Khan,Yuen,HWang,Dao,Karmakar} and STBC based on algebraic number theory  \cite{Damen,gamal,rajan0, WangGY, WangGY2, rajan1, kumar1, kumar2, oggier,oggier1,Guo09}.
Two typical designs of those codes are threaded algebraic ST (TAST) codes \cite{gamal, WangGY2} and cyclic division algebra based ST codes \cite{rajan0, rajan1, kumar1, kumar2, oggier,oggier1,Guo09} which
 have been shown to obtain full rate and full diversity. The full rate means $M$ symbols per channel use for $M$ transmit antennas. Note that the OSTBC for two transmit antennas, also namely Alamouti code \cite{Alamouti}, has a rate of $1$ symbol per channel use only, i.e., two independent information symbols are sent through a codeword occupying two symbol intervals. Since most of the high-rate STBC are designed based on the rank criterion which was derived from the pairwise error probability of the ST codes with ML decoding \cite{Tarokh98}, they have to rely on the ML decoding to collect the full diversity. Considering that the ML decoding complexity grows exponentially with the number of information symbols embedded in the codeword, the high-rate STBC obtain the full diversity at a price of the large decoding complexity.

Recently, several fast decodable STBC have been proposed to reduce the high decoding complexity while not compromising too much performance gains \cite{Choi, Hong, Calderbank}.  MIMO systems with linear receivers have also received a lot of research attention and information-theoretic analysis has been done in \cite{Aria,Caire,Tse,Aria2,Jiang}. Efficient   designs of ST codes for transmission over MIMO systems with linear receivers have also been studied in \cite{Liu,Shang,ZhangICASSP09,ZhangISIT09,WangHM}. Linear receiver based STBC designs are attractive because they can exploit both gains of efficiency and reliability of the signal transmission over MIMO systems with a low-complexity receiver such as zero-forcing (ZF) or minimum mean square error (MMSE) receiver. Similar to the OSTBC, the STBC designs in \cite{Liu,Shang,ZhangICASSP09,ZhangISIT09,WangHM} can also obtain full diversity with linear receivers. However, it is found that the rate of the linear receiver  based STBC is upper bounded by one \cite{Shang}, though it is larger than that of OSTBC.   To address the complexity and rate tradeoff, a partial interference cancelation (PIC) group decoding for MIMO systems was proposed and the design criterion of STBC with PIC group decoding was also derived in \cite{Guo}. In fact, the PIC group decoding can be viewed as an intermediate decoding approach between the ML receiver and the ZF receiver by trading a simple single-symbol decoding complexity for a higher code rate more than one symbol per channel use. Very recently, a systematic design of STBC achieving full diversity with PIC group decoding was  proposed in \cite{ZhangGC09}. However, the decoding complexity of the STBC design in \cite{ZhangGC09} is still equivalent to a joint ML decoding  of $M$ symbols.

In order to further reduce the decoding complexity, in this paper we propose a new design of STBC with PIC group decoding which can obtain both full diversity and low-complexity decoding, i.e., only half complexity of the STBC in \cite{ZhangGC09}.  Our proposed STBC is featured as an Alamouti block matrix, i.e., every element of the $2\times 2$ Alamouti code matrix is replaced by an elementary matrix and each elementary matrix is designed from multiple diagonal layers. It should be mentioned that in \cite{WangHM} the similar Alamouti block matrix was used where each entry of the Alamouti matrix was replaced by a Toeplitz STBC. The major difference between the STBC in \cite{WangHM} and our proposed STBC lie in the construction of elementary matrix, i.e., the Toeplitz matrix used in \cite{WangHM} and the multiple diagonal layers used in our codes. While the STBC in \cite{WangHM} achieves the full diversity with linear receivers but the code rate is not more than $1$.
It will be shown that our proposed STBC can achieve full diversity under both ML and PIC group decoding and the code rate can be up to $2$ when full diversity is obtained. Our simulation results demonstrate that the codes can obtain similar good performance to the codes in \cite{ZhangGC09} but   a half decoding complexity is reduced.
This paper is organized as follows. In Section II, the system model is described and the PIC group decoding algorithm is reviewed. In Section III, a design of STBC achieving full diversity with a reduced-complexity PIC group decoding is proposed. The full diversity is proved when PIC group decoding is applied. In Section IV, a few code design examples are given. In Section V, simulation results are presented.  Finally, we conclude the paper in Section VI.

 \emph{Notation:} Throughout this paper we use the following notations.
  Column vectors (matrices) are denoted by boldface lower
  (upper) case letters. Superscripts $^t$ and $^H$ stand for transpose and
conjugate transpose, respectively.  $\mathbb{C}$ denotes the field of complex numbers.
$\mathbf{I}_n$ denotes the $n\times n$ identity matrix, and  $\mathbf{0}_{m\times n}$ denotes the $m\times n$ matrix  whose elements are all $0$. 
$\det(\mathbf{X})$ represents the determinant of the matrix $\mathbf{X}$. $\otimes$ denotes the Kronecker product. $||\mathbf{X}||$ denotes the Frobenius norm of matrix (vector) $\mathbf{X}$.

\section{System Model}
\label{sec:system}

We consider a MIMO system with $M$ transmit antennas and $N$ receive antennas where data symbols
$\{S_l\}, l=1, \cdots, L$, are sent to the receiver over block fading channels. Before the data transmission,   the information symbol vector $\mathbf{s}=(S_1,\cdots, S_{L})^t$, selected from a signal constellation $\mathcal{A}$ such as QAM,  are encoded into a space-time block codeword matrix $\mathbf{X}(\mathbf{s})$ of size $T\times M$, where $T$ is the block length of the codeword. For any  $1\leq t\leq T$ and $1\leq m\leq M$, the $(t,m)$-th entry   of $\mathbf{X}(\mathbf{s})$ is  transmitted to the receiver from the $m$-th antenna during the $t$-th symbol period through flat fading channels. The received space-time signal at $N$ receive antennas, denoted by the $T\times N$ matrix $\mathbf{Y}$, can be expressed as
 \begin{eqnarray}\label{eqn:Y}
 \mathbf{Y}=\sqrt{\frac{\rho}{\mu}}\mathbf{X}(\mathbf{s})\mathbf{H}+\mathbf{W},
 \end{eqnarray}
where $\mathbf{W}$ is the noise matrix of size $T\times N$ whose elements are of i.i.d. with circularly symmetric complex Gaussian distribution with zero mean and unit variance denoted by  $\mathcal{CN}(0,1)$, $\mathbf{H}$ is the $M\times N$ channel matrix whose entries are also i.i.d. with distribution $\mathcal{CN}(0,1)$, $\rho$ denotes the average signal-to-noise ratio (SNR) per receive antenna and $\mu$ is the normalization factor such that the average  energy of the coded symbols transmitting from all antennas during one symbol period is one. We suppose that  channel state information (CSI) is known at receiver only. Therefore, the signal power is allocated uniformly across the transmit antennas in the absence of transmitter CSI.

In this paper, we consider that  information symbols $\{S_l\}, l=1, \cdots, L$ are coded by linear dispersion STBC as $\mathbf{X}(\mathbf{s})$.
To decode the transmitted sequence $\mathbf{s}$, we need to extract $\mathbf{s}$ from $\mathbf{X}(\mathbf{s})$. Through some operations, we can get an equivalent signal model from (\ref{eqn:Y}) as \cite{Liu}\cite{Shang}\cite{Guo}
\begin{eqnarray}\label{eqn:Y2}
\mathbf{y}=\sqrt{\frac{\rho}{\mu}}\mathcal{H}\mathbf{s}+\mathbf{w},
 \end{eqnarray}
where $\mathbf{y}$ is a  vector of length $TN$, $\mathbf{w}$ is a  $TN\times 1$ noise vector, and $\mathcal{ {H}}$ is an equivalent channel matrix of size $TN\times L$ with $L$ column vectors $\{\mathbf{g}_{l}\}$ for $l=1,2,\cdots, L$, i.e.,  $\mathcal{ {H}}=\left[\begin{array}{cccc} \mathbf{g}_{1}& \mathbf{g}_{2}& \cdots & \mathbf{g}_{L}\end{array}\right]$.

For an ML receiver, the estimate of $\mathbf{\hat{s}}^{\mathrm{ML}}$   that achieves the minimum of the squared Frobenius norm is given by
 \begin{eqnarray}\label{eqn:ML}
\mathbf{\hat{s}}^{\mathrm{ML}}=\arg\min_{\mathbf{s}\in\mathcal{A}^L}\left\|\mathbf{y}-\sqrt{\frac{\rho}{\mu}}\mathcal{H}\mathbf{s}\right\|^2 .
   \end{eqnarray}
It is  known that the ML decoding has prohibitively large computational complexity because it requires an exhaustive search over all candidate vectors $\mathbf{s}$. Conventional linear receivers such as ZF or MMSE detection can reduce the decoding complexity to single symbol decoding, but it may lose some performance benefits such as diversity gain. Although some linear receiver based STBC have been recently proposed in \cite{Liu,Shang,ZhangICASSP09,ZhangISIT09,WangHM}, they suffer the rate loss and the symbol rate cannot be above $1$ symbol per channel use. Very recently, in \cite{Guo}, a PIC group decoding was proposed to deal with the tradeoff among rate, diversity and complexity. In fact, the PIC group decoding can be viewed as a flexible decoding algorithm with adjustable receiver structure from a linear receiver to  an ML receiver. Next, we shall introduce the PIC group decoding proposed in \cite{Guo}.

\subsection{PIC Group Decoding \cite{Guo}}

Define index set $\mathcal{I}$ as
$
\mathcal{I}=\{1, 2, \cdots, L\},
$
where $L$ is the number of information symbols in $\mathbf{s}$. We then partition $\mathcal{I}$ into $P$ groups: $\mathcal{I}_1, \mathcal{I}_2,\cdots, \mathcal{I}_P$ with
$
\mathcal{I}_p=\{I_{p,1}, I_{p,2},\cdots, I_{p,l_p}\}, \,\,\, p=1,2,\cdots, P,
$
where $l_p$ is the cardinality of the subset $\mathcal{I}_p$. We call $\mathcal{I}=\{\mathcal{I}_1, \mathcal{I}_2,\cdots, \mathcal{I}_P\}$ a grouping scheme. For such a grouping scheme, we have
$
\mathcal{I}=\bigcup_{p=1}^{P}\mathcal{I}_p$ and $\sum_{p=1}^Pl_p=L.
$

Define
$
\mathbf{s}_{p}=\left[\begin{array}{cccc}
                                  S_{I_{p,1}}   & S_{I_{p,2}}  &  \cdots & S_{I_{p,l_p}}
                                 \end{array}
\right]^t$ and $\mathbf{G}_{p}=\left[\begin{array}{cccc}
                                  \mathbf{g}_{I_{p,1}}   & \mathbf{g}_{I_{p,2}}  &  \cdots & \mathbf{g}_{I_{p,l_p}}
                                 \end{array}
\right]$, for $p=1, \cdots, P$.
With these notations, (\ref{eqn:Y2}) can be written as
 \begin{eqnarray}
 \mathbf{y}=\sqrt{\frac{\rho}{\mu}}\sum_{p=1}^P\mathbf{G}_{p}\mathbf{s}_p+\mathbf{w}.
  \end{eqnarray}
Suppose that we want to decode the symbols embedded in the group $\mathbf{s}_p$. The  PIC group decoding first implements linear interference cancellation with a suitable choice of matrix $\mathbf{Q}_p$ in order to completely eliminate the interferences from other groups \cite{Guo}, i.e., $\mathbf{Q}_p\mathbf{G}_q=\mathbf{0}$, $\forall q\neq p$ and $q=1,2,\cdots, P$. Then, we have
\begin{eqnarray}\label{eqn:GZF}
\mathbf{z}_p&\triangleq &\mathbf{Q}_p\mathbf{y}\nonumber\\
&=&\sqrt{\frac{\rho}{\mu}} \mathbf{Q}_p\mathbf{G}_{p}\mathbf{s}_p+\mathbf{Q}_p\mathbf{w},\,\,\,\,\, p=1,2,\cdots, P,
 \end{eqnarray}
where the interference cancellation matrix $\mathbf{Q}_p$ can be chosen as follows \cite{Guo},
\begin{eqnarray}\label{eqn:Qp}
\mathbf{Q}_p=\mathbf{I}-\mathbf{G}_p^c \left(\left(\mathbf{G}_p^c\right)^H\mathbf{G}_p^c\right)^{-1}\left(\mathbf{G}_p^c\right)^H,\, p=1,\cdots, P,
 \end{eqnarray}
 when the following matrix has full column rank:
 \begin{eqnarray}
 \mathbf{G}_p^c=\left[\begin{array}{cccccc}
                       \mathbf{G}_1 & \cdots & \mathbf{G}_{p-1} & \mathbf{G}_{p+1} & \cdots & \mathbf{G}_P
                     \end{array}
\right].
 \end{eqnarray}
 If  $\mathbf{G}_p^c$ does not have
full column rank, then we need to pick a maximal linear independent vector
group from $\mathbf{G}_p^c$ and in this case a projection matrix $\mathbf{Q}_p$ can be
found too \cite{Guo}.
Afterwards, the symbols  in the group $\mathbf{s}_p$ are decoded with the ML decoding algorithm as follows,
 \begin{eqnarray}\label{eqn:PIC}
 \mathbf{\hat{s}}_p=\arg\min_{\mathbf{s}_p\in\mathcal{A}^{l_p}}\left\|\mathbf{z}_p-\sqrt{\frac{\rho}{\mu}}\mathbf{Q}_p  \mathbf{G}_p\mathbf{s}_p\right\|^2, \,\,\,p=1,2,\cdots,P.
    \end{eqnarray}


\begin{remark}[PIC Group Decoding Complexity] For the PIC group decoding, the following two steps are needed: the group zero-forcing to cancel the interferences coming from all the other groups as shown in (\ref{eqn:GZF}) and the  group ML decoding to jointly decode the symbols in one group  as shown in (\ref{eqn:PIC}). Therefore, the decoding complexity of the PIC group decoding should reside in the above two steps. Note that   the interference cancellation process shown in (\ref{eqn:GZF})  mainly involves with linear matrix computations, whose computational complexity  is small compared to the ML decoding for an exhaustive search of all candidate symbols. Therefore, to evaluate the decoding complexity of the PIC group decoding, we mainly focus on the computational complexity of the ML decoding within the PIC group decoding algorithm. The ML decoding complexity can be characterized by the number of Frobenius norms calculated in the decoding process \cite{ZhangGC09}. In the PIC group decoding algorithm the complexity is then   $\mathcal{O}= \sum_{p=1}^P |\mathcal{A}|^{l_p}$. It can be seen that the PIC group decoding provides a flexible decoding complexity which can be from the ZF decoding complexity $L|\mathcal{A}|$ to the ML decoding complexity $|\mathcal{A}|^{L}$.
\end{remark}

\begin{remark}[PIC-SIC Group Decoding] In \cite{Guo}, a successive interference cancellation (SIC)-aided PIC group decoding algorithm, namely PIC-SIC group decoding was proposed. Similar to the  BLAST detection algorithm \cite{Foschini}, the PIC-SIC group decoding is performed  after removing the already-decoded symbol set from the received signals to reduce the interference. If each group has only one symbol, then the PIC-SIC group decoding will be equivalent to the BLAST detection.

\end{remark}

In \cite{Guo},  full-diversity STBC design criteria were derived
when the PIC group decoding and the PIC-SIC group decoding are used at the receiver. In the following, we cite the main results of the STBC design criteria proposed in \cite{Guo}.
\begin{proposition}\label{prop1}
\cite[Theorem 1]{Guo} [\emph{Full-Diversity Criterion under PIC Group Decoding}]

For an  STBC  $\mathbf{X}$  with the PIC group decoding, the full diversity is achieved when
  \begin{enumerate}
    \item   the code $\mathbf{X}$ satisfies the full rank criterion, i.e.,
it achieves full diversity when
the ML receiver is used; \emph{and}
    \item
    for any $p$, $1\leq p\leq P$, any non-zero linear combination over $\Delta \mathcal{A}$ of the vectors in the $p$th group $\mathbf{G}_p$ does not belong to the space linearly spanned by all the vectors in the remaining vector groups, as long as $\mathbf{H}\neq 0$, i.e.,
    \begin{eqnarray}
\sum_{\forall i\in \mathcal{I}_p} a_i \mathbf{g}_{i} \neq \sum_{\forall j\notin \mathcal{I}_p} c_j \mathbf{g}_{j}, \,\, \,\,\,\,a_i\in \Delta \mathcal{A},\mathrm{not\, all\, zero}, \,\,c_j\in \mathbb{C}
\end{eqnarray}
where $
\mathcal{I}_p=\{I_{p,1}, I_{p,2},\cdots, I_{p,l_p}\}$ is the index  set corresponding to the vector group $\mathbf{G}_{p}$ and $\Delta \mathcal{A}=\{S-\hat{S}, | S, \hat{S}\in \mathcal{A}\}$.

  \end{enumerate}
\end{proposition}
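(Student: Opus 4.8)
I start from the projected model (\ref{eqn:GZF}): decoding the $p$th group is ML detection of $\mathbf{s}_p$ from $\mathbf{z}_p=\sqrt{\rho/\mu}\,\mathbf{Q}_p\mathbf{G}_p\mathbf{s}_p+\mathbf{Q}_p\mathbf{w}$, where by (\ref{eqn:Qp}) $\mathbf{Q}_p$ is the orthogonal projector onto the orthogonal complement of the column space of $\mathbf{G}_p^c$. Since $\mathbf{Q}_p=\mathbf{Q}_p^H=\mathbf{Q}_p^2$, the noise $\mathbf{Q}_p\mathbf{w}$ is white Gaussian on the range of $\mathbf{Q}_p$, so the pairwise error probability of mistaking $\mathbf{s}_p$ for $\hat{\mathbf{s}}_p$, conditioned on $\mathbf{H}$, equals $Q\big(\sqrt{\tfrac{\rho}{2\mu}\|\mathbf{Q}_p\mathbf{G}_p(\mathbf{s}_p-\hat{\mathbf{s}}_p)\|^{2}}\big)$. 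A union bound over the $P$ groups and over the finitely many difference vectors $\Delta\mathbf{s}_p=\mathbf{s}_p-\hat{\mathbf{s}}_p\neq\mathbf{0}$, together with $Q(x)\le\tfrac12 e^{-x^{2}/2}$, reduces the claim to showing that for each $p$ and each fixed $\Delta\mathbf{s}_p\neq\mathbf{0}$,
\[
\mathbb{E}_{\mathbf{H}}\!\left[\exp\!\left(-\tfrac{\rho}{4\mu}\,\|\mathbf{Q}_p\mathbf{G}_p\Delta\mathbf{s}_p\|^{2}\right)\right]=O(\rho^{-MN}),
\]
equivalently that the small-ball probability obeys $\Pr_{\mathbf{H}}\big(\|\mathbf{Q}_p\mathbf{G}_p\Delta\mathbf{s}_p\|^{2}\le\epsilon\big)=O(\epsilon^{MN})$ as $\epsilon\to 0^{+}$.

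\textbf{Step 2 --- a geometric identity, and where the two conditions enter.} Write $\mathbf{d}:=\mathbf{G}_p\Delta\mathbf{s}_p=\sum_{i\in\mathcal{I}_p}a_i\mathbf{g}_i$ with $a_i\in\Delta\mathcal{A}$ not all zero; by linearity of the dispersion code, $\mathbf{d}$ is, up to a fixed reordering of its entries, the vectorization of $\Delta\mathbf{X}\,\mathbf{H}$ with $\Delta\mathbf{X}=\mathbf{X}(\Delta\mathbf{s})$ supported on $\mathcal{I}_p$. Now $\|\mathbf{Q}_p\mathbf{d}\|^{2}=\min_{\mathbf{c}}\|\mathbf{d}-\mathbf{G}_p^c\mathbf{c}\|^{2}$ is the squared Euclidean distance from $\mathbf{d}$ to the column space of $\mathbf{G}_p^c$, so by the Gram-determinant formula $\|\mathbf{Q}_p\mathbf{d}\|^{2}=\det(\mathbf{B}^H\mathbf{B})/\det(\widetilde{\mathbf{G}}^H\widetilde{\mathbf{G}})\ge\lambda_{\min}(\mathbf{B}^H\mathbf{B})$, where $\widetilde{\mathbf{G}}$ is a maximal linearly independent subset of the columns of $\mathbf{G}_p^c$ and $\mathbf{B}=[\,\mathbf{d}\ \ \widetilde{\mathbf{G}}\,]$. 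Condition 2) of the proposition says exactly that $\mathbf{d}$ lies outside the column space of $\mathbf{G}_p^c$ for every $\mathbf{H}\neq\mathbf{0}$, i.e., that $\mathbf{B}$ has full column rank off $\{\mathbf{H}=\mathbf{0}\}$, so $\|\mathbf{Q}_p\mathbf{d}\|^{2}>0$ almost surely; condition 1) forces $\Delta\mathbf{X}$ to have rank $M$, which makes the linear map $\mathbf{H}\mapsto\Delta\mathbf{X}\mathbf{H}$ nondegenerate and, as used next, pins the rate at which $\det(\mathbf{B}^H\mathbf{B})$ can decay.

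\textbf{Step 3 --- the diversity exponent.} The numerator $\det(\mathbf{B}^H\mathbf{B})$ and denominator $\det(\widetilde{\mathbf{G}}^H\widetilde{\mathbf{G}})$ are non-negative polynomials in the real and imaginary parts of the entries of $\mathbf{H}$; since $\widetilde{\mathbf{G}}$ has $K-1$ columns, each of squared norm $O(\|\mathbf{H}\|^{2})$, Hadamard's inequality gives $\det(\widetilde{\mathbf{G}}^H\widetilde{\mathbf{G}})\le C\|\mathbf{H}\|^{2(K-1)}$, hence $\|\mathbf{Q}_p\mathbf{d}\|^{2}\ge C^{-1}\|\mathbf{H}\|^{-2(K-1)}\det(\mathbf{B}^H\mathbf{B})$, and it suffices to bound $\Pr_{\mathbf{H}}\big(\det(\mathbf{B}^H\mathbf{B})\le C\epsilon\,\|\mathbf{H}\|^{2(K-1)}\big)$. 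Here both conditions act together: condition 1) makes $\det(\Delta\mathbf{X}^H\Delta\mathbf{X})>0$ and condition 2) makes $\mathbf{B}$ full column rank off the origin, so $\det(\mathbf{B}^H\mathbf{B})$ is not identically zero and, along any ray through the origin in $\mathbf{H}$-space, vanishes no faster than a nondegenerate quadratic contributed by the $\mathbf{d}$-column; integrating over the $MN$ i.i.d.\ complex Gaussian entries of $\mathbf{H}$ then yields $\Pr_{\mathbf{H}}(\|\mathbf{Q}_p\mathbf{d}\|^{2}\le\epsilon)=O(\epsilon^{MN})$, by the same Chernoff/small-ball reasoning as in the classical rank-criterion proof \cite{Tarokh98} but now applied to the \emph{projected} difference $\mathbf{Q}_p\mathbf{d}$ rather than to $\Delta\mathbf{X}\mathbf{H}$ itself. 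Combined with Step 1 this gives $P_e=O(\rho^{-MN})$, i.e., full diversity; the argument is uniform in $p$, which is why it is enough to assume 1)--2) for every group.

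\textbf{Expected main obstacle.} The pairwise-error/union-bound reduction and the Gram algebra are routine. The real content is Step 3: showing that projecting $\mathbf{d}$ off the interference subspace does not lower the diversity exponent below $MN$. Condition 2) alone only delivers $\|\mathbf{Q}_p\mathbf{d}\|^{2}>0$ almost surely, which is far too weak --- one has to control the rate at which $\|\mathbf{Q}_p\mathbf{d}\|^{2}$ approaches $0$, and this is precisely where the full-rank condition 1) is indispensable. Turning that into a clean exponent count, and carefully handling the measure-zero set of channel realizations on which $\mathbf{G}_p^c$ loses column rank and $\mathbf{Q}_p$ has to be rebuilt from a maximal independent subset, is the delicate part of the proof.
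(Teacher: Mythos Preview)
This proposition is not proved in the present paper at all: it is quoted as \cite[Theorem~1]{Guo} and then used as a black-box criterion in the proofs of Theorems~\ref{theorem2} and~\ref{theorem3}. There is therefore no ``paper's own proof'' to compare your attempt against; the argument lives entirely in \cite{Guo}.

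That said, your Steps~1 and~2 set things up correctly and in the same spirit as the original source: pass to the projected model~(\ref{eqn:GZF}), use that $\mathbf{Q}_p$ is an orthogonal projector so the post-projection noise is still white on $\mathrm{range}(\mathbf{Q}_p)$, write the conditional PEP as a $Q$-function of $\sqrt{(\rho/2\mu)}\,\|\mathbf{Q}_p\mathbf{G}_p\Delta\mathbf{s}_p\|$, and recast $\|\mathbf{Q}_p\mathbf{d}\|$ as the distance $\min_{\mathbf{c}}\|\mathbf{d}-\mathbf{G}_p^c\mathbf{c}\|$. You also correctly isolate the crux in your ``expected main obstacle'': what is actually needed is a uniform bound $\|\mathbf{Q}_p\mathbf{G}_p\Delta\mathbf{s}_p\|^{2}\ge c_0\,\|\mathbf{H}\|_F^{2}$ with $c_0>0$ independent of $\mathbf{H}$, after which the exponent $MN$ drops out of the standard Gaussian averaging.

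Your Step~3, however, does not establish that bound. The Gram--Hadamard detour yields $\|\mathbf{Q}_p\mathbf{d}\|^{2}\ge C^{-1}\|\mathbf{H}\|^{-2(K-1)}\det(\mathbf{B}^H\mathbf{B})$, but you never show $\det(\mathbf{B}^H\mathbf{B})\gtrsim\|\mathbf{H}\|^{2K}$ uniformly, and worse, $K$ (the column count of $\widetilde{\mathbf{G}}$) itself varies with $\mathbf{H}$ through the rank of $\mathbf{G}_p^c$, so the inequality is not even globally well-posed. The phrase ``vanishes no faster than a nondegenerate quadratic contributed by the $\mathbf{d}$-column'' is precisely the statement to be proved, not an argument for it. The cleaner route --- and the one taken in \cite{Guo} --- is to exploit the degree-$2$ homogeneity of $\mathbf{H}\mapsto\min_{\mathbf{c}}\|\mathbf{d}(\mathbf{H})-\mathbf{G}_p^c(\mathbf{H})\,\mathbf{c}\|^{2}$ and condition~2) directly to get the uniform lower bound on the unit sphere $\|\mathbf{H}\|_F=1$, which sidesteps both the determinant bookkeeping and the rank-drop set you worry about.
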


\begin{proposition}\label{prop2}
\cite{Guo} [\emph{Full-Diversity Criterion under PIC-SIC Group Decoding}]

For an  STBC  $\mathbf{X}$  with the PIC-SIC group decoding, the full diversity is achieved when
  \begin{enumerate}
    \item   the code $\mathbf{X}$ satisfies the full rank criterion, i.e.,
it achieves full diversity when
the ML receiver is used; \emph{and}
    \item   at each decoding stage, for $\mathbf{G}_{q_1}$, which corresponds to the current to-be decoded symbol group $\mathbf{s}_{q_1}$,   any non-zero linear combination over $\Delta \mathcal{A}$ of the vectors in   $\mathbf{G}_{q_1}$ does not belong to the space linearly spanned by all the vectors in the remaining groups $\mathbf{G}_{q_2},  \cdots, \mathbf{G}_{q_L}$  corresponding to yet uncoded symbol groups, as long as $\mathbf{H}\neq 0$.
  \end{enumerate}
\end{proposition}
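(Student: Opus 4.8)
The plan is to mimic the pairwise-error-probability (PEP) argument behind Proposition~\ref{prop1}, adding one new ingredient to cope with the fact that the SIC stages are chained. I would fix the decoding order $q_1,q_2,\ldots,q_P$ and, for each $k$, introduce the event $\mathcal{E}_k$ that stage $k$ is the \emph{first} stage whose group ML decoder errs. Since any decoding failure implies some $\mathcal{E}_k$, a union bound gives $\Pr(\mathrm{error})\le\sum_{k=1}^P\Pr(\mathcal{E}_k)$, so it suffices to show $\Pr(\mathcal{E}_k)=O(\rho^{-MN})$ for every $k$. On $\mathcal{E}_k$ the groups $q_1,\ldots,q_{k-1}$ were decoded correctly, hence the interference subtraction is exact and the residual is $\mathbf{y}^{(k)}=\sqrt{\rho/\mu}\sum_{j\ge k}\mathbf{G}_{q_j}\mathbf{s}_{q_j}+\mathbf{w}$; after applying the projection $\mathbf{Q}_{q_k}$ (which annihilates $\mathbf{G}_{q_{k+1}},\ldots,\mathbf{G}_{q_P}$, taking a maximal linearly independent subset when these fail to have full column rank) one is left with $\mathbf{z}^{(k)}=\sqrt{\rho/\mu}\,\mathbf{Q}_{q_k}\mathbf{G}_{q_k}\mathbf{s}_{q_k}+\mathbf{Q}_{q_k}\mathbf{w}$. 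Thus $\Pr(\mathcal{E}_k)$ is bounded by the error probability of an ordinary ML decoder for group $q_k$ over an interference-free channel, and error propagation does not affect the diversity order.

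Next I would bound the per-stage PEP. A Chernoff estimate on the conditional PEP gives, for any ordered pair with difference $\mathbf{a}=\mathbf{s}_{q_k}-\hat{\mathbf{s}}_{q_k}\in\Delta\mathcal{A}^{l_{q_k}}\setminus\{\mathbf{0}\}$,
\[
\Pr\bigl(\mathbf{s}_{q_k}\to\hat{\mathbf{s}}_{q_k}\mid\mathbf{H}\bigr)\le\exp\Bigl(-\tfrac{\rho}{4\mu}\bigl\|\mathbf{Q}_{q_k}\mathbf{G}_{q_k}\mathbf{a}\bigr\|^2\Bigr),
\]
so the whole problem reduces to lower bounding $\|\mathbf{Q}_{q_k}\mathbf{G}_{q_k}\mathbf{a}\|^2$ and then averaging over the Rayleigh channel. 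Let $\boldsymbol{\alpha}$ be $\mathbf{a}$ zero-padded to length $L$ on the coordinates of $\mathcal{I}_{q_k}$; then $\mathbf{G}_{q_k}\mathbf{a}=\mathcal{H}\boldsymbol{\alpha}$ is the equivalent-channel image of the codeword difference $\mathbf{X}(\boldsymbol{\alpha})$ and is linear in the entries of $\mathbf{H}$ and $\bar{\mathbf{H}}$, as is each $\mathbf{g}_i$. Condition~1 (full-rank criterion) forces $\mathbf{X}(\boldsymbol{\alpha})$ to have full column rank, so $\mathcal{H}\boldsymbol{\alpha}\neq\mathbf{0}$ whenever $\mathbf{H}\neq\mathbf{0}$; Condition~2 additionally says $\mathcal{H}\boldsymbol{\alpha}$ is not in the span $\mathcal{V}_k$ of the columns of the yet-undecoded groups $q_{k+1},\ldots,q_P$, again for every $\mathbf{H}\neq\mathbf{0}$. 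Hence $\|\mathbf{Q}_{q_k}\mathbf{G}_{q_k}\mathbf{a}\|^2=\mathrm{dist}^2(\mathcal{H}\boldsymbol{\alpha},\mathcal{V}_k)>0$ for all $\mathbf{H}\neq\mathbf{0}$. Writing this distance as a ratio of Gram determinants of $\{\mathcal{H}\boldsymbol{\alpha}\}\cup\{\mathbf{g}_i:i\in\mathcal{I}_{q_{k+1}}\cup\cdots\cup\mathcal{I}_{q_P}\}$ --- polynomials in $\mathbf{H},\bar{\mathbf{H}}$ that, by Conditions~1--2, are not identically zero --- and exploiting the positive homogeneity of degree one in $\mathbf{H}$, a compactness argument on the sphere $\|\mathbf{H}\|=1$ should yield a code-and-grouping-dependent constant $c_k>0$ with $\|\mathbf{Q}_{q_k}\mathbf{G}_{q_k}\mathbf{a}\|^2\ge c_k\|\mathbf{H}\|^2$.

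Finally I would substitute this into the Chernoff bound and average over the $MN$ i.i.d.\ $\mathcal{CN}(0,1)$ entries of $\mathbf{H}$, obtaining $\mathbb{E}_{\mathbf{H}}\exp(-\tfrac{\rho c_k}{4\mu}\|\mathbf{H}\|^2)=(1+\tfrac{\rho c_k}{4\mu})^{-MN}=\Theta(\rho^{-MN})$; summing over the finitely many difference vectors $\mathbf{a}$ and the $P$ stages preserves the $\rho^{-MN}$ decay, which is exactly full diversity $MN$. The part I expect to be the main obstacle is producing the constant $c_k>0$: Conditions~1 and~2 guarantee non-degeneracy only pointwise in $\mathbf{H}$, and the map $\mathbf{H}\mapsto\mathrm{dist}^2(\mathcal{H}\boldsymbol{\alpha},\mathcal{V}_k)$ can be discontinuous exactly where the equivalent channel vectors of the undecoded groups drop rank, so one must either isolate the full-rank region and dispose of its (measure-zero) complement, or argue directly with the non-vanishing Gram-determinant polynomials, in order to upgrade pointwise positivity to the uniform quadratic lower bound. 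Everything else is the proof of Proposition~\ref{prop1} verbatim, with ``all other groups'' replaced by ``all not-yet-decoded groups'' and with the first-error decomposition added on top to neutralize SIC error propagation.
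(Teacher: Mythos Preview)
The paper does not supply its own proof of Proposition~\ref{prop2}; the result is quoted from \cite{Guo} and used as a black box (just like Proposition~\ref{prop1}). So there is no in-paper argument to compare your attempt against.

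As for the merits of your sketch: the architecture is the standard one and is what \cite{Guo} does --- first-error decomposition to neutralise SIC error propagation, per-stage Chernoff bound on the PEP, and reduction of everything to a uniform quadratic lower bound $\|\mathbf{Q}_{q_k}\mathbf{G}_{q_k}\mathbf{a}\|^2\ge c_k\,\|\mathbf{H}\|^2$. You are also right that producing $c_k>0$ is the only non-routine step, and your concern there is not cosmetic. The map $\mathbf{H}\mapsto\mathrm{dist}^2(\mathcal{H}\boldsymbol{\alpha},\mathcal{V}_k)$ is only \emph{upper} semicontinuous where the columns of the yet-undecoded groups drop rank, so compactness on the sphere $\|\mathbf{H}\|=1$ by itself does \emph{not} force a positive infimum; one can write down linear-in-$\mathbf{H}$ data satisfying Condition~2 pointwise for which the infimum is zero. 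Your Gram-determinant ratio has the same defect, since the denominator vanishes exactly on the bad set. The way out is to use Condition~1 for more than the statement $\mathcal{H}\boldsymbol{\alpha}\neq\mathbf{0}$: writing $\mathrm{dist}^2(\mathcal{H}\boldsymbol{\alpha},\mathcal{V}_k)=\min_{\mathbf{c}}\|\mathcal{H}\boldsymbol{\alpha}-\mathbf{G}_{q_k}^{c}\mathbf{c}\|^2$ and recognising the argument of the norm as the equivalent-channel image of a codeword difference (with $\boldsymbol{\alpha}$ in the $q_k$ slot and $-\mathbf{c}$ in the undecoded slots) ties the lower bound to singular values of codeword-difference matrices, where the full-rank criterion does real work. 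Your two proposed fixes (``dispose of the measure-zero set'' or ``non-vanishing polynomial'') are heading in the right direction but, as written, do not yet close the gap; the first still needs a quantitative bound on the surviving region, and the second needs a numerator that does not vanish when the denominator does.
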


\section{Proposed STBC with PIC Group Decoding}\label{sec:new}
In this section, we propose a design of STBC which can achieve full diversity with a low-complexity PIC group decoding.
 Compared to the one proposed in \cite{ZhangGC09} whose PIC group decoding   consists of $P$ groups with a  joint ML decoding of $M$ symbols per group, in the following our new STBC with PIC group decoding has $2P$ groups with a joint ML decoding of $M/2$ symbols  per group.

\subsection{Code Design}

Our proposed space-time code $\mathbf{\mathbf{B}}$,  i.e., $\mathbf{X}(\mathbf{s})$ in (\ref{eqn:Y}), is of size $T\times M$ (for any given $T$, $M=2m$, $m$ is an integer,  and $T\geq M$) given by
\begin{eqnarray}\label{eqn:new}
 \mathbf{\mathbf{B}}_{M,T,P} =
 \left[\begin{array}{cc}
    \mathbf{C}^1_{\frac{M}{2},\frac{T}{2},P} & \mathbf{C}^2_{\frac{M}{2},\frac{T}{2},P} \\
    -(\mathbf{C}^2_{\frac{M}{2},\frac{T}{2},P})^\ast & (\mathbf{C}^1_{\frac{M}{2},\frac{T}{2},P})^\ast\\
  \end{array}
   \right],
\end{eqnarray}
where $P=\frac{T-M}{2}+1$ and the matrices  $\mathbf{C}^i_{\frac{M}{2},\frac{T}{2},P}$ ($i=1,2$)  of size $\frac{T}{2}\times \frac{M}{2}$ is given by
\begin{eqnarray}\label{eqn:wei}
  \left[\begin{array}{cccc}
                     X_{(i-1)P+1,1} & 0  & \cdots  &  0 \\
                     X_{(i-1)P+2,1}& X_{(i-1)P+1,2} & \ddots  & \vdots  \\
                      \vdots & X_{(i-1)P+2,2} &  \ddots &  0 \\
                     X_{iP,1} & \vdots & \ddots & X_{(i-1)P+1,\frac{M}{2}}  \\
                     0 & X_{iP,2} & \ddots  & X_{(i-1)P+2,\frac{M}{2}} \\
                     \vdots & 0  & \ddots  & \vdots \\
                     0 &  \vdots &  \ddots & X_{iP,\frac{M}{2}}
                  \end{array}
 \right]
 \end{eqnarray} for $i=1,2$
with the $p$th diagonal layer from   left to  right written as the  vector $\mathbf{X}^i_p$ of length $M/2$, shown as
\begin{eqnarray}
  \mathbf{X}^i_p =\left[\begin{array}{ccc}
                      X_{(i-1)P+p,1}     & \cdots &  X_{(i-1)P+p,\frac{M}{2}}
                    \end{array}
  \right]^t
   \end{eqnarray} for $i=1,2$ and $p=1,\cdots, P$.
   The vector $\mathbf{X}^i_p$ is further given by
 \begin{eqnarray}\label{eqn:RM}
  \mathbf{X}^i_p =\mathbf{\Theta}\mathbf{s}^i_p, \,\,\,\,\,p=1,2,\cdots, P
   \end{eqnarray}
where $\mathbf{\Theta}$ is a $\frac{M}{2}\times \frac{M}{2}$ rotation matrix   and $\mathbf{s}^i_p$ is a length-$\frac{M}{2}$ vector of information symbols given by
   \begin{eqnarray}
  \mathbf{s}^i_p =\left[\begin{array}{ccc}
                     S_{1+q_{i,p}}    & \cdots &  S_{\frac{M}{2}+q_{i,p}}
                    \end{array}
  \right]^t,
   \end{eqnarray}
   with $q_{i,p}=(i-1)P\frac{M}{2}+(p-1)\frac{M}{2}$,
   for $i=1,2$ and $p=1,\cdots, P$.

    One code example for $4$ transmit antennas and $T=6$ is given by
 \begin{eqnarray}\label{eqn:Tx4a}
 \mathbf{\mathbf{B}}_{4,6,2} = \left[\begin{array}{cccc}
                      {X}_{1,1} & 0  &  {X}_{3,1}  &  0 \\
                      {X}_{2,1} &  {X}_{1,2} &  {X}_{4,1}&  {X}_{3,2} \\
                      0 &  {X}_{2,2} & 0 &  {X}_{4,2} \\
                     -{ {X}}^\ast_{3,1} & 0 & { {X}}^\ast_{1,1} & 0  \\
                      -X^\ast_{4,1}& - X^\ast_{3,2} &   X^\ast_{2,1} & {{X}}^\ast_{1,2}\\
                      0 & -{ {X}}^\ast_{4,2} &  0 & { {X}}^\ast_{2,2}
                  \end{array}
 \right],
 \end{eqnarray}
where $[\begin{array}{cc}
       X_{1,1} &
        X_{1,2}
    \end{array}
   ]^t=\mathbf{\Theta} [\begin{array}{cc}
       S_{1} &
        S_{2}
    \end{array}
   ]^t$, $[\begin{array}{cc}
       X_{2,1} &
        X_{2,2}
    \end{array}
   ]^t=\mathbf{\Theta} [\begin{array}{cc}
       S_{3} &
        S_{4}
    \end{array}
   ]^t$, $
   [\begin{array}{cc}
       X_{3,1} &
        X_{3,2}
    \end{array}
   ]^t=\mathbf{\Theta} [\begin{array}{cc}
       S_{5} &
        S_{6}
    \end{array}
   ]^t$, and $
   [\begin{array}{cc}
       X_{4,1} &
        X_{4,2}
    \end{array}
   ]^t=\mathbf{\Theta} [\begin{array}{cc}
       S_{7} &
        S_{8}
    \end{array}
   ]^t$.  The constellation rotation matrix $\mathbf{\Theta}$ for this example can be chosen as
$$
\mathbf{\Theta}=\left[\begin{array}{rr}
                        \gamma  &  \delta  \\
                          -\delta  & \gamma
                      \end{array}
\right],
$$ where $\gamma=\cos \theta$ and $\delta=\sin \theta$ with $\theta=1.02$ \cite{Guo}.

In general,  the signal rotation matrix $\mathbf{\Theta}$ is designed   to achieve the signal space diversity. In this paper, we adopt the   optimal cyclotomic lattices design proposed in \cite{WangGY}. For $M$ transmit antennas, from \cite[Table I]{WangGY} we can get a set of integers $(m,n)$ and let $K=mn$. Then, the optimal lattice $\mathbf{\Theta}$ of size $M\times M$ is given by  \cite[Eq. (16)]{WangGY}
  \begin{eqnarray}\label{eqn:cyclo}
 \mathbf{\Theta}= \left[\begin{array}{llll}
                          \zeta_K & \zeta_K^2 & \cdots & \zeta_K^M \\
                         \zeta_{K}^{1+n_2m} & \zeta_{K}^{2(1+n_2m)} & \cdots & \zeta_{K}^{M(1+n_2m)}\\
                         \vdots & \vdots & \ddots & \vdots \\
                       \zeta_{K}^{1+n_Mm} & \zeta_{K}^{2(1+n_Mm)} & \cdots & \zeta_{K}^{M(1+n_Mm)}
                       \end{array}
 \right],
 \end{eqnarray}
 where
$\zeta_K=\exp(\mathbf{j}2\pi/K)$
with $\mathbf{j}=\sqrt{-1}$ and $n_2,n_3, \cdots, n_M$ are distinct integers such that $1+n_im$ and $K$ are co-prime for any $2\leq i \leq M$.

\begin{remark}[Code Design for $M=2m-1$ ]
When $M$ is odd, the proposed code design can be obtained by extracting $M$ columns of the codeword of $\mathbf{\mathbf{B}}_{M+1,T,P}$. This is equivalent to transmitting nothing via the $(M+1)$-th antenna  using the code $\mathbf{\mathbf{B}}_{M+1,T,P}$.
\end{remark}

\begin{remark}[Code Rate]
For $M$ even,  $MP$ independent information symbols are sent over $T$ time slots and $T=2P+M-2$. Hence, the rate is
  \begin{eqnarray}
  R=\frac{MP}{2P+M-2}
 \end{eqnarray} symbols per channel use. For very a large $P$, the rate can be up to $M/2$. For a very large $M$,  the rate can be up to $P$.
\end{remark}

\begin{remark}[PIC Group Decoding Complexity]
The PIC group decoding complexity is related to the number of symbols to be jointly ML decoded in one group.
The ML decoding complexity in the PIC group decoding is $\sum^{P}_{p=1}{|\mathcal{A}|^{l_p}}$ \cite{ZhangGC09}, where $l_p$ is the number of symbols in the $p$-th group.
Our proposed code in (\ref{eqn:new}) reduces the decoding complexity due to its  group orthogonality similar to Alamouti code \cite{Alamouti}.  Therefore, the decoding complexity is reduced to $\sum^{2P}_{p=1}{|\mathcal{A}|^{l_p/2}}$.

\end{remark}

\begin{remark}[Comparison with linear receiver based STBC]
   It should be noted that if $X_{p,1}=X_{p,2}=\cdots,X_{p,\frac{M}{2}}$ for all $p=1,2,\cdots, 2P$, i.e., (\ref{eqn:wei}) is a Toeplitz matrix, then the proposed STBC in (\ref{eqn:new}) is very similar to the one in \cite{WangHM} (in \cite{WangHM}, the time
    reversal for the information symbols is used, while here it is
    not used). However, the rate of the linear receiver based STBC is not above $1$.

\end{remark}

\subsection{Achieving Full Diversity with ML Decoding}

Next, we shall show that the proposed STBC in (\ref{eqn:new}) can collect the full diversity with ML decoding.

\begin{theorem}\label{theorem1}
Consider a MIMO system with $M$ transmit antennas and $N$ receive antennas over block fading channels. The  STBC $\mathbf{\mathbf{B}}_{M,T,P}$ as described  in (\ref{eqn:new}) achieves full diversity under the ML decoding.
\end{theorem}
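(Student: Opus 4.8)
The plan is to verify the rank criterion for full diversity under ML decoding \cite{Tarokh98}. Since $\mathbf{B}_{M,T,P}$ is a linear dispersion code, $\mathbf{B}_{M,T,P}(\mathbf{s})-\mathbf{B}_{M,T,P}(\hat{\mathbf{s}})=\mathbf{B}_{M,T,P}(\Delta\mathbf{s})$ with $\Delta\mathbf{s}=\mathbf{s}-\hat{\mathbf{s}}$, so it suffices to prove that $\mathbf{B}_{M,T,P}(\Delta\mathbf{s})$ has full column rank $M$ for every nonzero $\Delta\mathbf{s}$ with entries in $\Delta\mathcal{A}$; equivalently, that $\mathbf{B}_{M,T,P}(\Delta\mathbf{s})\mathbf{v}=\mathbf{0}$ forces $\mathbf{v}=\mathbf{0}$.

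First I would fix notation. Write $\mathbf{v}=[\mathbf{v}_1^t\ \mathbf{v}_2^t]^t$ with $\mathbf{v}_1,\mathbf{v}_2\in\mathbb{C}^{M/2}$, put $\Delta\mathbf{s}^i_p=\mathbf{s}^i_p-\hat{\mathbf{s}}^i_p$, and let $\Delta X^i_{p,k}$ denote the $k$-th entry of $\mathbf{\Theta}\Delta\mathbf{s}^i_p$, so that the $(r,k)$-entry of the elementary matrix $\mathbf{C}^i_{\frac{M}{2},\frac{T}{2},P}(\Delta\mathbf{s})$ equals $\Delta X^i_{r-k+1,k}$ when $1\le r-k+1\le P$ and $0$ otherwise. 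The Alamouti block structure in (\ref{eqn:new}) yields the two matrix equations $\mathbf{C}^1\mathbf{v}_1+\mathbf{C}^2\mathbf{v}_2=\mathbf{0}$ and $-(\mathbf{C}^2)^*\mathbf{v}_1+(\mathbf{C}^1)^*\mathbf{v}_2=\mathbf{0}$. Conjugating the second and equating the $(m+1)$-th entries of both sides for $m=0,1,\dots,\tfrac{T}{2}-1$, the two conditions combine into
\begin{equation}\label{eqn:thm1sketch}
\sum_{k+p=m+2}\mathbf{V}_k\begin{bmatrix}\Delta X^1_{p,k}\\ \Delta X^2_{p,k}\end{bmatrix}=\mathbf{0},
\end{equation}
where the sum runs over $1\le k\le M/2$, $1\le p\le P$, and $\mathbf{V}_k=\bigl[\begin{smallmatrix} v_{1,k} & v_{2,k}\\ \overline{v_{2,k}} & -\overline{v_{1,k}}\end{smallmatrix}\bigr]$. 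The gain from this rearrangement is that $\det\mathbf{V}_k=-(|v_{1,k}|^2+|v_{2,k}|^2)$, so each $\mathbf{V}_k$ is \emph{either} invertible \emph{or} the zero matrix.

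The structural ingredient is the signal-space-diversity (non-vanishing coordinate) property of the cyclotomic rotation $\mathbf{\Theta}$ from \cite{WangGY}: whenever $\Delta\mathbf{s}^i_p\ne\mathbf{0}$, every coordinate $\Delta X^i_{p,k}$, $k=1,\dots,M/2$, is nonzero. Let $p_0$ be the smallest layer index for which $\Delta\mathbf{s}^1_{p_0}\ne\mathbf{0}$ or $\Delta\mathbf{s}^2_{p_0}\ne\mathbf{0}$ (it exists because $\Delta\mathbf{s}\ne\mathbf{0}$); then $\Delta X^1_{p,k}=\Delta X^2_{p,k}=0$ for all $p<p_0$ and all $k$, while $[\Delta X^1_{p_0,k}\ \Delta X^2_{p_0,k}]^t\ne\mathbf{0}$ for every $k$. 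I would then show $\mathbf{V}_k=\mathbf{0}$ for $k=1,\dots,M/2$ by induction on $k$: specializing (\ref{eqn:thm1sketch}) to $m=p_0+k-2$, every contributing pair $(p,k')$ must have $p\ge p_0$ (otherwise its vector is zero), hence $k'=p_0+k-p\le k$; the terms with $k'<k$ vanish by the induction hypothesis, so the identity reduces to $\mathbf{V}_k[\Delta X^1_{p_0,k}\ \Delta X^2_{p_0,k}]^t=\mathbf{0}$, and since this vector is nonzero, $\mathbf{V}_k$ cannot be invertible, i.e., $\mathbf{V}_k=\mathbf{0}$. Therefore $\mathbf{v}=\mathbf{0}$, $\mathbf{B}_{M,T,P}(\Delta\mathbf{s})$ has full column rank $M$, and the rank criterion delivers full diversity under ML decoding.

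The routine parts are the linear-dispersion reduction and the appeal to the rank criterion. The only step that needs care is the index bookkeeping that converts the banded ``elementary matrix'' $\mathbf{C}^i$ into the anti-diagonal coupling $k+p=m+2$; and I expect the main conceptual point to be that conjugating the second Alamouti equation aligns the two blocks into the single $2\times 2$ matrix $\mathbf{V}_k$, whose dichotomy — together with the non-vanishing coordinates of $\mathbf{\Theta}\Delta\mathbf{s}^{i}_{p_0}$ — propagates $\mathbf{V}_1=\cdots=\mathbf{V}_{M/2}=\mathbf{0}$ through the induction.
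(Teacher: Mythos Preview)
Your argument is correct and relies on the same structural ingredients as the paper: the rank criterion, the non-vanishing coordinate property of the cyclotomic rotation $\mathbf{\Theta}$, the minimal layer index $p_0$ with a nonzero difference, and the ``zero-or-invertible'' dichotomy coming from the Alamouti $2\times2$ structure. The execution, however, is different. The paper permutes rows and columns of the difference matrix into a block lower-triangular form whose diagonal blocks are the Alamouti matrices $\check{\mathbf{T}}_{p,j}=\bigl[\begin{smallmatrix}\check X_{p,j}&\check X_{P+p,j}\\-\check X_{P+p,j}^*&\check X_{p,j}^*\end{smallmatrix}\bigr]$ built from the \emph{symbol} differences, and then bounds $\det\bigl((\breve{\mathbf{B}}')^H\breve{\mathbf{B}}'\bigr)\ge\prod_j\bigl(|\check X_{p_0,j}|^2+|\check X_{P+p_0,j}|^2\bigr)^2>0$. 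You instead work directly with the null space, conjugate the second Alamouti equation, and assemble the $2\times2$ matrices $\mathbf{V}_k$ from the \emph{null-space vector} $\mathbf{v}$; the same dichotomy then drives a clean induction on $k$. Your route avoids the row/column permutation bookkeeping and the determinant inequality, at the cost of the small index computation $k+p=m+2$; the paper's route has the advantage of also delivering an explicit lower bound on the coding gain. Conceptually the two are dual realizations of the same idea.
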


\begin{proof}[Proof of Theorem \ref{theorem1}]

In order to prove that ST code $\mathbf{\mathbf{B}}_{M,T,P}$ can obtain full diversity under ML decoding, it is sufficient to prove that $\breve{\mathbf{B}}=\mathbf{B}-\hat{\mathbf{B}}$ achieves full rank for any distinct pair of ST codewords $\mathbf{B}$ and $\hat{\mathbf{B}}$. For any pair of distinct codewords $\mathbf{B}$ and $\hat{\mathbf{B}}$, there exits at least one  vector $\mathbf{X}^i_p$ such that $\mathbf{X}^i_p-\hat{\mathbf{X}}^i_p\neq \mathbf{0}$, where $\mathbf{X}^i_p$ and $\hat{\mathbf{X}}^i_p$ are related to $\mathbf{s}^i_p$ and $\hat{\mathbf{s}}^i_p$ from (\ref{eqn:RM}), respectively. Define $\check{X}=X-\hat{X}$ as the difference between symbols $X$ and $\hat{X}$. Then, we can further deduce that no any element in the vector $\check{\mathbf{X}}^i_p$ can be zero because the signal space diversity is obtained from the signal rotation in (\ref{eqn:RM}) \cite{WangGY}.

Observing the proposed code (\ref{eqn:Tx4a}) for $M=4$,  we can get the codeword different matrix as follows:
\begin{eqnarray}\label{eqn:Tx4b}
\breve{\mathbf{B}}_{4,6,2} = \left[\begin{array}{cccc}
                     \check{X}_{1,1} & 0  & \check{X}_{3,1}  &  0 \\
                     \check{X}_{2,1} & \check{X}_{1,2} & \check{X}_{4,1}& \check{X}_{3,2} \\
                      0 & \check{X}_{2,2} & 0 & \check{X}_{4,2} \\
                     -{\check{X}}^\ast_{3,1} & 0 & {\check{X}}^\ast_{1,1} & 0  \\
                      -{\check{X}}^\ast_{4,1}& -{\check{X}}^\ast_{3,2} &  {\check{X}}^\ast_{2,1} & {\check{X}}^\ast_{1,2}\\
                      0 & -{\check{X}}^\ast_{4,2} &  0 & {\check{X}}^\ast_{2,2}
                  \end{array}
 \right].
 \end{eqnarray}
After  permutating  rows and columns of $\breve{\mathbf{B}}_{4,6,2}$, we have
\begin{eqnarray}
\breve{\mathbf{B}}_{4,6,2}^{'} = \left[\begin{array}{cccc}\label{tran}
                     \check{X}_{1,1} & \check{X}_{3,1}   & 0 &  0 \\
                      -{\check{X}}^\ast_{3,1} & {\check{X}}^\ast_{1,1} & 0 & 0 \\
                     \check{X}_{2,1} & \check{X}_{4,1} & \check{X}_{1,2} & \check{X}_{3,2} \\
                      -{\check{X}}^\ast_{4,1} & {\check{X}}^\ast_{2,1} & -{\check{X}}^\ast_{3,2}  & {\check{X}}^\ast_{1,2}  \\
                      0& 0 &  \check{X}_{2,2} & \check{X}_{4,2}\\
                      0 & 0 &  -{\check{X}}^\ast_{4,2} & {\check{X}}^\ast_{2,2}
                  \end{array}
 \right].
 \end{eqnarray}
Note that $\breve{\mathbf{B}}_{4,6,2}$ and $\breve{\mathbf{B}}_{4,6,2}^{'}$ have the same rank, because rows/columns permutations do not change the rank of the matrix.

Similarly, we can write a codeword difference matrix $\breve{\mathbf{B}}_{M,T,P}^{'}$ of the proposed code in (\ref{eqn:new}) as follows: (after some row/column permutations)
\begin{eqnarray}\label{eqn:MatPhi}
 \breve{\mathbf{B}}_{M,T,P}^{'}
 = \left[\begin{array}{cccc}
                     \check{\mathbf{T}}_{1,1} & \mathbf{0}  & \cdots  &  \mathbf{0} \\
                     \check{\mathbf{T}}_{2,1} & \check{\mathbf{T}}_{1,2} & \ddots  & \vdots  \\
                      \vdots & \check{\mathbf{T}}_{2,2} &  \ddots &  \mathbf{0} \\
                     \vdots & \vdots & \ddots & \check{\mathbf{T}}_{1,\frac{M}{2}}  \\
                      \vdots &  \vdots  & \ddots  & \check{\mathbf{T}}_{2,\frac{M}{2}} \\
                     \check{\mathbf{T}}_{P,1} & \vdots   & \ddots  & \vdots  \\
                     \mathbf{0} &  \check{\mathbf{T}}_{P,2} &  \ddots & \vdots\\
                     \vdots & \mathbf{0} & \ddots & \vdots\\
                     \mathbf{0} & \vdots & \ddots & \check{\mathbf{T}}_{P,\frac{M}{2}}
                  \end{array}
                 \right] ,
 \end{eqnarray}
where $\mathbf{0}$ is in fact $\mathbf{0}_{2\times 2}$ and
\begin{eqnarray}
\ {\check{\mathbf{T}}_{i,j}} = \left[\begin{array}{cc}
    \check{X}_{i,j} & \check{X}_{P+i,j} \\
    -\check{X}^\ast_{P+i,j} & \check{X}^\ast_{i,j}\\
  \end{array}
  \right]
 \end{eqnarray} for $i=1,\cdots, P$ and $j=1,\cdots,\frac{M}{2}$.

Because all elements in  $\check{\mathbf{X}}^i_p$ are nonzero,     matrices $\{\check{\mathbf{T}}_{p,1},\check{\mathbf{T}}_{p,2},\cdots,\check{\mathbf{T}}_{p,\frac{M}{2}}\}$ must be all nonzero. Assume  that $p$ is the minimal  index such that  $\check{\mathbf{X}}^i_p\neq \mathbf{0}$. Then, we can write (\ref{eqn:MatPhi}) as
\begin{eqnarray}
 \breve{\mathbf{B}}_{M,T,P}^{'}
 = \left[\begin{array}{llll}
                   \mathbf{0}  &\mathbf{0} &\cdots&\mathbf{0} \\
                   \vdots&\vdots&\ddots&\vdots\\
                   \mathbf{0}  &\mathbf{0} &\cdots&\mathbf{0} \\
                   --& -- & -- & --\\
                     \check{\mathbf{T}}_{p,1} & \mathbf{0}  & \cdots  &  \mathbf{0} \\
                     \check{\mathbf{T}}_{p+1,1} & \check{\mathbf{T}}_{p,2} & \ddots  & \vdots  \\
                      \vdots & \check{\mathbf{T}}_{p+1,2} &  \ddots &  \mathbf{0} \\
                     \vdots & \vdots & \ddots & \check{\mathbf{T}}_{p,\frac{M}{2}}  \\
                     --& -- & -- & --\\
                      \vdots &  \vdots  & \ddots  & \check{\mathbf{T}}_{p+1,\frac{M}{2}} \\
                     \check{\mathbf{T}}_{P,1} & \vdots   & \ddots  & \vdots  \\
                     \mathbf{0} &  \check{\mathbf{T}}_{P,2} &  \ddots & \vdots\\
                     \vdots & 0 & \ddots & \vdots\\
                     \mathbf{0} & \vdots & \ddots & \check{\mathbf{T}}_{P,\frac{M}{2}}
                  \end{array}
                 \right]
                  \triangleq \left[\begin{array}{c}
                  \mathbf{0}\\
                  --\\
                    \check{\mathbf{D}}_1 \\
                       --\\
                   \check{\mathbf{D}}_2 \\
                  \end{array}
                  \right].\nonumber
 \end{eqnarray}
We further have
\begin{eqnarray}
 &&\det\left((\breve{\mathbf{B}}_{M,T,P}^{'})^H \breve{\mathbf{B}}_{M,T,P}^{'}\right)\nonumber\\ 
  &\geq&  \det(\check{\mathbf{D}}^H_1)\det(\check{\mathbf{D}}_1)\\
  &\overset{(a)}{=}& \prod_{j=1,\cdots,M/2} \left(|\check{X}_{p,j}|^2+|\check{X}_{P+p,j}|^2\right)^2 \\&>&0\nonumber,
 \end{eqnarray}
where  $\overset{(a)}{=}$ is obtained from  the property of block diagonal matrix $\det(\check{\mathbf{D}}_1)=\prod_{j=1,2,\cdots,\frac{M}{2}}\det(\check{\mathbf{T}}_{p,j})$.

 Therefore, for any nonzero $\mathbf{X}^i_p-\hat{\mathbf{X}}^i_p$  our proposed codes in (\ref{eqn:new}) can achieve full diversity with ML decoding.
\end{proof}

\subsection{Achieving Full diversity with PIC Group Decoding when $P=2$}

Next, we show that the proposed STBC can obtain full diversity when a PIC group decoding with a particular grouping scheme is used at the receiver.

\begin{theorem}\label{theorem2}
Consider a MIMO system with $M$ transmit antennas and $N$ receive antennas over block fading channels. The  STBC  $\mathbf{\mathbf{B}}_{M,T,P}$ as described in (\ref{eqn:new}) with two diagonal layers (i.e., $P=2$)  is used at the transmitter. The equivalent channel matrix is $\mathcal{H}\in \mathcal{C}^{TN\times MP}$. If the received signal is decoded using the PIC group decoding with the grouping scheme $\mathcal{I}={\{\mathcal{I}_1,\mathcal{I}_2,\mathcal{I}_3,\mathcal{I}_4\}}$, where $\mathcal{I}_p=\{(p-1)M/2+1,\cdots,pM/2\}$ for $p=1,2,3,4$, i.e., the size of each group is equal to  $M/2$, then the code $\mathbf{\mathbf{B}}_{M,T,P}$ achieves the full diversity.   The code rate of the full-diversity STBC can be up to $2$ symbols per channel use.
\end{theorem}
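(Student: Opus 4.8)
The plan is to invoke Proposition~\ref{prop1}. Its condition~(1), the full-rank criterion, is exactly Theorem~\ref{theorem1}, so the work reduces to verifying condition~(2) for the prescribed four-group scheme. Specialising (\ref{eqn:new}) to $P=2$ gives $T=M+2$, and $\mathbf{C}^1,\mathbf{C}^2$ become $(\tfrac{M}{2}+1)\times\tfrac{M}{2}$ banded matrices, each carrying one diagonal layer on its main diagonal and one on its sub-diagonal. Tracking the index bookkeeping of (\ref{eqn:wei})--(\ref{eqn:RM}), the four groups correspond precisely to these four layers: $\mathcal{I}_1$ to the main diagonal of $\mathbf{C}^1$, $\mathcal{I}_2$ to its sub-diagonal, $\mathcal{I}_3$ to the main diagonal of $\mathbf{C}^2$, and $\mathcal{I}_4$ to its sub-diagonal, each being a rotated symbol vector $\mathbf{X}^i_p=\mathbf{\Theta}\mathbf{s}^i_p$ of length $M/2$.

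Next I would recast condition~(2) in the signal domain. Write $\mathbf{H}=[\mathbf{H}_1^t\ \ \mathbf{H}_2^t]^t$ with $\mathbf{H}_1,\mathbf{H}_2\in\mathbb{C}^{(M/2)\times N}$, and let $\mathbf{h}^{(1)}_k$ and $\mathbf{h}^{(2)}_k$ denote the $k$th rows of $\mathbf{H}_1$ and $\mathbf{H}_2$. Using the Alamouti block form of $\mathbf{B}$, the noiseless channel output is
\begin{eqnarray}
\mathbf{B}\mathbf{H}=\left[\begin{array}{c}\mathbf{C}^1\mathbf{H}_1+\mathbf{C}^2\mathbf{H}_2\\ -(\mathbf{C}^2)^\ast\mathbf{H}_1+(\mathbf{C}^1)^\ast\mathbf{H}_2\end{array}\right].
\end{eqnarray}
Then $\sum_{i\in\mathcal{I}_p}a_i\mathbf{g}_i$ is the vectorisation of $\mathbf{B}\mathbf{H}$ obtained by activating only the layer of group $p$, with rotated symbol vector $\boldsymbol{\alpha}=\mathbf{\Theta}(a_i)_{i\in\mathcal{I}_p}$; by the signal-space-diversity property of the cyclotomic rotation $\mathbf{\Theta}$ (the same fact used in the proof of Theorem~\ref{theorem1}), every entry of $\boldsymbol{\alpha}$ is nonzero whenever $(a_i)_{i\in\mathcal{I}_p}\neq\mathbf{0}$ in $\Delta\mathcal{A}^{M/2}$. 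Likewise $\sum_{j\notin\mathcal{I}_p}c_j\mathbf{g}_j$ is the vectorisation of $\mathbf{B}\mathbf{H}$ when the remaining three layers carry arbitrary complex vectors in $\mathbb{C}^{M/2}$ (since $\mathbf{\Theta}$ is invertible). Condition~(2) thus amounts to showing that equating these two configurations of $\mathbf{B}\mathbf{H}$ forces $\mathbf{H}_1=\mathbf{H}_2=\mathbf{0}$.

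The heart of the argument is a row-by-row peeling. For $p\in\{\mathcal{I}_1,\mathcal{I}_3\}$ the active layer lies on a main diagonal, so the \emph{first} row of the corresponding banded block contains exactly one symbol; comparing the first rows of the two Alamouti sub-blocks then yields a $2\times2$ homogeneous system in $(\mathbf{h}^{(1)}_1,\mathbf{h}^{(2)}_1)$ whose coefficient matrix is Alamouti-like, with determinant $|\alpha_1|^2+|\eta_1|^2\ge|\alpha_1|^2>0$ where $\eta_1$ is the single interfering symbol surviving in that row, forcing $\mathbf{h}^{(1)}_1=\mathbf{h}^{(2)}_1=\mathbf{0}$. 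Because $\mathbf{C}^1,\mathbf{C}^2$ are two-layer banded matrices, row $k$ involves only $\mathbf{h}^{(1)}_k,\mathbf{h}^{(2)}_k$ and $\mathbf{h}^{(1)}_{k-1},\mathbf{h}^{(2)}_{k-1}$, so once $\mathbf{h}^{(1)}_{k-1}=\mathbf{h}^{(2)}_{k-1}=\mathbf{0}$ the same $2\times2$ argument, now using $\alpha_k\neq0$, gives $\mathbf{h}^{(1)}_k=\mathbf{h}^{(2)}_k=\mathbf{0}$; induction on $k=1,\dots,M/2$ then yields $\mathbf{H}_1=\mathbf{H}_2=\mathbf{0}$. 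For $p\in\{\mathcal{I}_2,\mathcal{I}_4\}$ the active layer is a sub-diagonal, which reaches the \emph{last} row of the block, so one runs the identical induction starting from the last row and moving upward. In every case $\mathbf{H}=\mathbf{0}$, contradicting $\mathbf{H}\neq\mathbf{0}$, so condition~(2) holds and $\mathbf{B}_{M,T,2}$ achieves full diversity under the stated PIC group decoding. The rate claim is simply the $P=2$ instance of the Code-Rate remark: $R=MP/(2P+M-2)=2M/(M+2)$, which tends to $2$ as $M\to\infty$.

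The step I expect to be the main obstacle is the structural observation that the boundary (first or last) row of each banded block $\mathbf{C}^i$ touches only a single diagonal layer; this is precisely what lets the Alamouti block structure separate the channel one pair of rows $(\mathbf{h}^{(1)}_k,\mathbf{h}^{(2)}_k)$ at a time, after which the bandedness propagates the vanishing. Making the peeling close uniformly for all four groups (first-row start for the two main-diagonal layers, last-row start for the two sub-diagonal layers) and checking that the relevant $2\times2$ determinant is always of the form $|\alpha_k|^2+|\cdot|^2>0$ is where the bookkeeping care lies; the remaining row comparisons are then automatically consistent.
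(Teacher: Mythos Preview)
Your proposal is correct, and the overall plan (invoke Proposition~\ref{prop1}, cite Theorem~\ref{theorem1} for condition~(1), then verify condition~(2) by contradiction using the banded two-layer structure and the signal-space-diversity property of $\mathbf{\Theta}$) matches the paper. The execution of condition~(2), however, follows a genuinely different route.

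The paper first derives the explicit equivalent channel matrix $\mathcal{H}$ via Lemma~1, then applies a row/column permutation to cast it as a block-Toeplitz array of $2\times 2$ Alamouti blocks $\mathcal{F}_j$. It locates the smallest index $q$ with $h_q\neq 0$, reads off the single row in which only $\mathbf{G}_1'$ and its Alamouti partner $\mathbf{G}_3'$ are nonzero while $\mathbf{G}_2',\mathbf{G}_4'$ vanish, and obtains the scalar identity $h_j\sum_k a_k\theta_{j,k}=0$, which contradicts the cyclotomic diversity property. The Alamouti-paired group $\mathbf{G}_3'$ is then handled separately by noting its orthogonality to $\mathbf{G}_1'$. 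Your argument instead works directly with $\mathbf{B}\mathbf{H}$, bypassing Lemma~1 and the permutation: you pair the first (or last) row of the top sub-block with the corresponding row of the bottom sub-block to obtain a $2\times 2$ system with Alamouti determinant $|\alpha_k|^2+|\eta_k|^2>0$, forcing $\mathbf{h}^{(1)}_k=\mathbf{h}^{(2)}_k=\mathbf{0}$, and then peel inductively along the band. In effect, you absorb the paper's separate ``orthogonality to $\mathbf{G}_3$'' step into the nonsingularity of the $2\times 2$ block, and you prove the stronger statement $\mathbf{H}=\mathbf{0}$ rather than extracting a contradiction from one row. The trade-off is that your approach is more self-contained and treats all four groups uniformly (just switching the direction of the induction), at the price of an explicit induction; the paper's approach, once the permuted block structure is in hand, needs only a single row but must invoke the Alamouti orthogonality as an additional observation.
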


In order to prove Theorem \ref{theorem2}, let us first introduce the following lemma.


\begin{lemma}
Consider the system as described in \emph{Theorem 2} with $N=1$ and STBC $\mathbf{\mathbf{B}}_{M,T,P}$ as given by (\ref{eqn:new}).
                   The equivalent channel matrix $\mathcal{H}\in \mathcal{C}^{TN\times MP}$ of the code $\mathbf{\mathbf{B}}_{M,T,P}$ can be expressed as
  \begin{eqnarray}\label{eqn:equiH}
&& \left[\begin{array}{cccccccc}
                     \mathcal{G}^1_1 & \mathcal{G}^1_2 & \cdots & \mathcal{G}^1_P & \mathcal{G}^2_1 & \mathcal{G}^2_2 & \ldots & \mathcal{G}^2_P \nonumber\\
                      -(\mathcal{G}^2_1)^\ast & -(\mathcal{G}^2_2)^\ast & \cdots & -(\mathcal{G}^2_P)^\ast & (\mathcal{G}^1_1)^\ast & (\mathcal{G}^1_2)^\ast & \cdots & (\mathcal{G}^1_P)^\ast
                  \end{array}
                 \right] \nonumber\\
&&\triangleq \left[\begin{array}{cccc}
                     \mathbf{G}_1 & \mathbf{G}_2 &
                     \cdots & \mathbf{G}_{2P}
                  \end{array}
                 \right],
 \end{eqnarray}
 where $\mathcal{G}^i_p$ is given by
  \begin{eqnarray}\label{eqn:Gip}
 \mathcal{G}^i_p = \left[\begin{array}{c}
                     \mathbf{0}_{(p-1)\times M/2} \\
                     \mathrm{diag}(\mathbf{h}_i)\mathbf{\Theta} \\
                      \mathbf{0}_{(P-p)\times M/2}
                  \end{array}
                 \right], \,\,\,\, i=1,2;\,\,\,p=1,2,\cdots,P
 \end{eqnarray}
 with $\mathbf{h}_1=[\begin{array}{cccc}
                       h_1 & h_2 & \cdots & h_{\frac{M}{2}}
                     \end{array}
 ]^t$ and $\mathbf{h}_2=[\begin{array}{cccc}
                       h_{\frac{M}{2}+1} & h_{\frac{M}{2}+2} & \cdots & h_{M}
                     \end{array}
 ]^t$.  $h_j$ denotes the channel between the $j$th transmit antenna and the single receive antenna for $j=1,2,\cdots, M$.


 \end{lemma}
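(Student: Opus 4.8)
The plan is to vectorize (\ref{eqn:Y}) directly for the case $N=1$ and to track how the ``Alamouti-of-layered-matrices'' structure of $\mathbf{B}_{M,T,P}$ in (\ref{eqn:new})--(\ref{eqn:wei}) turns into the claimed block form. First I would set $\mathbf{H}=\mathbf{h}=[\begin{array}{cccc} h_1 & \cdots & h_M\end{array}]^t$ and split it as $\mathbf{h}=[\begin{array}{cc}\mathbf{h}_1^t & \mathbf{h}_2^t\end{array}]^t$ conformally with the two column blocks of $\mathbf{B}_{M,T,P}$, and likewise split $\mathbf{Y}=[\begin{array}{cc}\mathbf{Y}_1^t & \mathbf{Y}_2^t\end{array}]^t$ and $\mathbf{W}=[\begin{array}{cc}\mathbf{W}_1^t & \mathbf{W}_2^t\end{array}]^t$ into the two row blocks (each of length $T/2$). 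From (\ref{eqn:new}) this gives $\mathbf{Y}_1=\sqrt{\rho/\mu}\,(\mathbf{C}^1\mathbf{h}_1+\mathbf{C}^2\mathbf{h}_2)+\mathbf{W}_1$ and $\mathbf{Y}_2=\sqrt{\rho/\mu}\,(-(\mathbf{C}^2)^\ast\mathbf{h}_1+(\mathbf{C}^1)^\ast\mathbf{h}_2)+\mathbf{W}_2$, where I abbreviate $\mathbf{C}^i=\mathbf{C}^i_{M/2,T/2,P}$. Defining the linearized observation $\mathbf{y}=[\begin{array}{cc}\mathbf{Y}_1^t & (\mathbf{Y}_2^\ast)^t\end{array}]^t$ and $\mathbf{w}=[\begin{array}{cc}\mathbf{W}_1^t & (\mathbf{W}_2^\ast)^t\end{array}]^t$ as in (\ref{eqn:Y2}), the conjugation of the second block removes the conjugates on the $\mathbf{C}^i$ at the price of conjugating $\mathbf{h}$, so that $\mathbf{y}$ is linear in the information symbols, which is the model (\ref{eqn:Y2}).

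The key structural step is the following identity for the diagonal-layered matrix: for the $\frac{T}{2}\times\frac{M}{2}$ matrix $\mathbf{C}^i$ of (\ref{eqn:wei}), whose $p$th diagonal layer is the vector $\mathbf{X}^i_p$ placed in rows $p,p+1,\dots,p+\frac{M}{2}-1$, one has, for every $\mathbf{v}\in\mathbb{C}^{M/2}$,
\[
\mathbf{C}^i\mathbf{v}=\sum_{p=1}^{P}\left[\begin{array}{c}\mathbf{0}_{(p-1)\times 1}\\ \mathrm{diag}(\mathbf{v})\,\mathbf{X}^i_p\\ \mathbf{0}_{(P-p)\times 1}\end{array}\right].
\]
This is a one-line index check: by (\ref{eqn:wei}) the $(t,j)$ entry of $\mathbf{C}^i$ equals $X_{(i-1)P+t-j+1,\,j}$ when $1\le t-j+1\le P$ and is $0$ otherwise, so $(\mathbf{C}^i\mathbf{v})_t=\sum_{j}X_{(i-1)P+(t-j+1),\,j}v_j$, and re-indexing with $p=t-j+1$ identifies this with the $t$th entry of the right-hand side above. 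Substituting $\mathbf{X}^i_p=\mathbf{\Theta}\mathbf{s}^i_p$ from (\ref{eqn:RM}) turns each summand into $\mathcal{G}^i_p(\mathbf{v})\,\mathbf{s}^i_p$, where $\mathcal{G}^i_p(\mathbf{v})$ denotes the matrix obtained from $\mathcal{G}^i_p$ in (\ref{eqn:Gip}) by replacing $\mathbf{h}_i$ with $\mathbf{v}$; in particular $\mathcal{G}^i_p(\mathbf{h}_i)=\mathcal{G}^i_p$.

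It then remains to assemble the blocks. Applying the identity with $\mathbf{v}=\mathbf{h}_1$ to $\mathbf{C}^1$ and $\mathbf{v}=\mathbf{h}_2$ to $\mathbf{C}^2$ in $\mathbf{Y}_1$, and noting that the offsets $q_{i,p}=(i-1)P\frac{M}{2}+(p-1)\frac{M}{2}$ in the definition of $\mathbf{s}^i_p$ re-index $\mathbf{s}=(S_1,\dots,S_{MP})^t$ exactly into the ordered groups $\mathbf{s}^1_1,\dots,\mathbf{s}^1_P,\mathbf{s}^2_1,\dots,\mathbf{s}^2_P$, the top $\frac{T}{2}$ rows of $\mathcal{H}$ become $[\begin{array}{cccccccc}\mathcal{G}^1_1 & \cdots & \mathcal{G}^1_P & \mathcal{G}^2_1 & \cdots & \mathcal{G}^2_P\end{array}]$. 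For the bottom $\frac{T}{2}$ rows I would conjugate $\mathbf{Y}_2$, apply the same identity (now with $\mathbf{v}=\mathbf{h}_1^\ast$ on $\mathbf{C}^2$ and $\mathbf{v}=\mathbf{h}_2^\ast$ on $\mathbf{C}^1$), and carry the outer conjugate through $\mathbf{\Theta}$, which exhibits those rows as $[\begin{array}{cccccccc}-(\mathcal{G}^2_1)^\ast & \cdots & -(\mathcal{G}^2_P)^\ast & (\mathcal{G}^1_1)^\ast & \cdots & (\mathcal{G}^1_P)^\ast\end{array}]$; stacking the two row blocks and collecting columns in groups of $M/2$ gives $\mathcal{H}=[\begin{array}{cccc}\mathbf{G}_1 & \mathbf{G}_2 & \cdots & \mathbf{G}_{2P}\end{array}]$ as stated. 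The only place that needs genuine care is the index bookkeeping of the shifted diagonal layers in the second step, together with not dropping or mis-signing a conjugate when the second Alamouti block is conjugated and passed through the complex rotation $\mathbf{\Theta}$; the rest is substitution.
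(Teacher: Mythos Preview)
Your proposal is correct and follows essentially the same route as the paper's proof: split $\mathbf{Y}$ and $\mathbf{h}$ conformally with the Alamouti block structure, decompose each $\mathbf{C}^i\mathbf{h}_i$ layer-by-layer into shifted diagonal pieces (your displayed identity is exactly the paper's step $\mathbf{C}^i_p\mathbf{h}_i=\mathcal{H}_{i,p}\mathbf{X}^i_p$ written in one line), substitute $\mathbf{X}^i_p=\mathbf{\Theta}\mathbf{s}^i_p$, and conjugate the second block. One small point of bookkeeping: the paper takes the linearized observation as $[\mathbf{y}_1^t,\;-(\mathbf{y}_2)^{*t}]^t$ rather than $[\mathbf{Y}_1^t,\;(\mathbf{Y}_2^*)^t]^t$, and it is precisely that extra minus sign that produces the signs $-(\mathcal{G}^2_p)^\ast$ and $+(\mathcal{G}^1_p)^\ast$ in the bottom block of (\ref{eqn:equiH}); with your convention the signs in the bottom row block would be flipped, so be sure to adopt $-\mathbf{Y}_2^\ast$ when you write it up.
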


The proof of \emph{Lemma 1} is in Appendix.

Now we are ready to prove \emph{Theorem \ref{theorem2}}.
\begin{proof}[Proof of  Theorem \ref{theorem2}]
Let us consider $N=1$ first.
For a MISO system with $P=2$, from \emph{Lemma 1} the equivalent channel matrix of the proposed code $\mathbf{\mathbf{B}}_{M,T,P}$  is given by
\begin{eqnarray}\label{eqn:Hprove}
 \mathcal{H}&=& \left[\begin{array}{cccc}
                     \mathcal{G}^1_1 & \mathcal{G}^1_2 &   \mathcal{G}^2_1 & \mathcal{G}^2_2   \\
                      -(\mathcal{G}^2_1)^\ast & -(\mathcal{G}^2_2)^\ast & (\mathcal{G}^1_1)^\ast & (\mathcal{G}^1_2)^\ast
                  \end{array}
                 \right]\\
                 &\triangleq &\left[\begin{array}{cccc}
                     \mathbf{G}_1 & \mathbf{G}_2 &
                     \mathbf{G}_{3} & \mathbf{G}_{4}
                  \end{array}
                 \right],
 \end{eqnarray}
where $\mathcal{G}^i_p$ is given by (\ref{eqn:Gip}), for $i=1,2,\,\, p=1,2$.

Denote $\mathbf{f}_i$ a length-$\frac{M}{2}$ row vector, given by $\mathbf{f}_i=h_i \mathbf{\Upsilon}_i$ for $i=1,2,\cdots, M$, where $\mathbf{\Upsilon}_i$ is the $i$-th row of the following matrix:
\begin{eqnarray}\label{eqn:Upsilon}
\mathbf{\Upsilon} = \left[\begin{array}{c}
                            \mathbf{\Theta} \\
                             \mathbf{\Theta}
                          \end{array}
\right]
\end{eqnarray}
with $\mathbf{\Theta}$ being the rotation matrix of size $\frac{M}{2}\times \frac{M}{2}$.

Then, (\ref{eqn:Hprove}) can be written as
\begin{eqnarray}
\mathcal{H} &=&
\left[\begin{array}{cccc}
                 \mathbf{G}_1 & \mathbf{G}_2  & \mathbf{G}_3  & \mathbf{G}_4
               \end{array}
               \right]
\\\label{eqn:HH}
&=& \left[\begin{array}{cccc}
                      \mathbf{f}_1 & \mathbf{0} & \mathbf{f}_{\frac{M}{2}+1}  & \mathbf{0} \\
                      \mathbf{f}_2 & \mathbf{f}_1 & \mathbf{f}_{\frac{M}{2}+2}  & \mathbf{f}_{\frac{M}{2}+1} \\
                           \vdots  &  \mathbf{f}_2&  \vdots &  \mathbf{f}_{\frac{M}{2}+2} \\
                      \mathbf{f}_{\frac{M}{2}} & \vdots &\mathbf{f}_M& \vdots \\
                     \mathbf{0}  &   \mathbf{f}_{\frac{M}{2}}    &  \mathbf{0} &  \mathbf{f}_{M} \\
                        -\mathbf{f}_{\frac{M}{2}+1}^*  & \mathbf{0}   &   \mathbf{f}_1^* & \mathbf{0}  \\
                        -\mathbf{f}_{\frac{M}{2}+2}^*  & -\mathbf{f}_{\frac{M}{2}+1}^*    &   \mathbf{f}_2^* & \mathbf{f}_1^*   \\
                        \vdots &  -\mathbf{f}_{\frac{M}{2}+2}^*   &    \vdots  &  \mathbf{f}_2^*   \\
                        -\mathbf{f}_M^* & \vdots   &  \mathbf{f}_{\frac{M}{2}}^* & \vdots  \\
                        \mathbf{0} &  -\mathbf{f}_{M}^*    &    \mathbf{0}  &   \mathbf{f}_{\frac{M}{2}}^*
                    \end{array}
\right],
\end{eqnarray}
where $\mathbf{0}=\mathbf{0}_{1\times\frac{M}{2}}$.

After some row/column permutations of (\ref{eqn:HH}), we can get
\begin{eqnarray}\label{eqn:HHH}
\mathcal{H}^{'} &=& \left[\begin{array}{cccc}
                     [\mathbf{G}_1^{'}  & \mathbf{G}_3^{'} ]& [\mathbf{G}_2^{'} & \mathbf{G}_4^{'}]
                   \end{array}
\right]\nonumber\\
&=&\left[\begin{array}{cc}
                       \mathcal{F}_1 & \mathbf{0}_{2\times M} \\
                        \mathcal{F}_2 & \mathcal{F}_1  \\
                        \vdots &  \mathcal{F}_2  \\
                        \mathcal{F}_{\frac{M}{2}} &\vdots  \\
                        \mathbf{0}_{2\times M} & \mathcal{F}_{\frac{M}{2}}
                      \end{array}
\right],
\end{eqnarray}
where  $\mathcal{F}_j$ is a $2\times M$ matrix given by
\begin{eqnarray}
\mathcal{F}_j=\left[\begin{array}{cc}
                      \mathbf{f}_j & \mathbf{f}_{j+\frac{M}{2}} \\
                     -\mathbf{f}_{j+\frac{M}{2}}^* &   \mathbf{f}_j^*
                   \end{array}
\right]
\end{eqnarray}
for $j=1,2,\cdots, \frac{M}{2}$.

Next, we shall  prove that any non-zero linear combination of the vectors in $\mathbf{G}_1^{'}$  over $\Delta \mathcal{A}$ does not belong to the space linearly spanned by all the vectors in the vector groups $[\mathbf{G}_2^{'}\,\,\mathbf{G}_4^{'}]$ as long as $\mathbf{h}\neq 0$, i.e., for $a_i\in \Delta \mathcal{A}$  not all zero, and $c_j\in \mathbb{C}$
\begin{eqnarray}\label{eqn:true}
\sum_{\forall  \mathbf{g}_{i}\subset \mathbf{G}_1^{'}} a_i \mathbf{g}_{i} \neq \sum_{\forall \mathbf{g}_{j}\subset \{\mathbf{G}_2^{'}, \mathbf{G}_4^{'}\}} c_j \mathbf{g}_{j},
\end{eqnarray}
where $\mathbf{g}_{i}$ is a column vector.

 To prove (\ref{eqn:true}), we use the self-contradiction method as follows.

 Suppose that    for $a_i\in \Delta \mathcal{A}$  not all zero, and $c_j\in \mathbb{C}$
 \begin{eqnarray}\label{eqn:assump}
\sum_{\forall  \mathbf{g}_{i}\subset \mathbf{G}_1^{'}} a_i \mathbf{g}_{i} = \sum_{\forall \mathbf{g}_{j}\subset \{\mathbf{G}_2^{'}, \mathbf{G}_4^{'}\}} c_j \mathbf{g}_{j},
\end{eqnarray}

For any $\mathbf{h}\neq \mathbf{0}$ with $\mathbf{h}=[\begin{array}{ccc}
                                      h_1 & \cdots & h_{M}
                                    \end{array}
]^t$, there exists the minimum index $q$ ($1\leq q\leq M$) such that $h_q\neq 0$ and $h_{v}=0, \forall v<q$. Then, we can find that the block $\mathcal{F}_j$ associated with $h_q$ is nonzero and that  blocks $\mathcal{F}_1,\cdots, \mathcal{F}_{j-1}$ must be all zeros. Therefore, (\ref{eqn:HHH}) can be expressed as
\begin{eqnarray}\label{eqn:HHH2}
\mathcal{H}^{'}
&=&\left[\begin{array}{cc}
                       \mathbf{0} & \mathbf{0} \\
                       \vdots & \vdots \\
                       \mathbf{0}& \mathbf{0} \\
                        \mathcal{F}_j & \mathbf{0} \\
                        \vdots &  \mathcal{F}_j  \\
                        \mathcal{F}_{\frac{M}{2}} &\vdots  \\
                        \mathbf{0} & \mathcal{F}_{\frac{M}{2}}
                      \end{array}
\right],
\end{eqnarray}
where $\mathbf{0}=\mathbf{0}_{2\times M}$.

Using (\ref{eqn:assump}) and examining the $(2j-1)$th row of (\ref{eqn:HHH2}), we get
\begin{eqnarray}\label{eqn:cond1}
\mathbf{f}_j \cdot  \mathbf{a}  =0,
\end{eqnarray}
where $\mathbf{a}  =[\begin{array}{ccc}
                                                           a_1 & \cdots & a_{M/2}
                                                         \end{array}
]^t$ and $a_k\in \Delta \mathcal{A}$  not all zero, $k=1,\cdots, M/2$.

Recall $\mathbf{f}_j=h_j\mathbf{\Upsilon}_j$, where $\mathbf{\Upsilon}_j$ is the $j$th row of the matrix $\mathbf{\Upsilon}$ in (\ref{eqn:Upsilon}). We can rewrite (\ref{eqn:cond1}) as
\begin{eqnarray}\label{eqn:cond2}
h_j \sum_{k=1}^{M/2} a_{k} \theta_{j,k}=0,
\end{eqnarray}
for  $a_k\in \Delta \mathcal{A}$  not all zero, where $\theta_{j,k}$ is the ($j,k$)-th entry of the matrix $\mathbf{\Upsilon}$.

Note that the rotation matrix $\mathbf{\Theta}$ in (\ref{eqn:cyclo}) is designed so that $\sum_{k=1}^{M/2} a_{k} \theta_{j,k}\neq0$, for $a_k\in \Delta \mathcal{A}$  not all zero. It contradicts the result (\ref{eqn:cond2}) based on the assumption of (\ref{eqn:assump}).
Hence, (\ref{eqn:true}) holds, i.e., any non-zero linear combination of the vectors in $\mathbf{G}_1^{'}$  over $\Delta \mathcal{A}$ does not belong to the space linearly spanned by all the vectors in the vector groups $[\mathbf{G}_2^{'}\,\,\mathbf{G}_4^{'}]$. Furthermore, we can see that  vector group $\mathbf{G}_1^{'}$ is orthogonal to $\mathbf{G}_3^{'}$. We then conclude that any non-zero linear combination of the vectors in $\mathbf{G}_1^{'}$  over $\Delta \mathcal{A}$ does not belong to the space linearly spanned by all the vectors in the vector groups $[\mathbf{G}_2^{'}\,\,\mathbf{G}_3^{'}\,\,\mathbf{G}_4^{'}]$.

Similarly, we can prove that any non-zero linear combination of the vectors in $\mathbf{G}_p^{'}$  over $\Delta \mathcal{A}$ does not belong to the space linearly spanned by all the vectors in the remaining vector groups, for $p=2, 3, 4$.

Note that $\mathbf{G}_m^{'}$ is a row permutation of $\mathbf{G}_m$ for $m=1,2,3,4$, respectively. We   prove that any non-zero linear combination of the vectors in $\mathbf{G}_m$  over $\Delta \mathcal{A}$ does not belong to the space linearly spanned by all the vectors in the remaining vector groups, for $m=1, 2, 3, 4$.


In the above, we prove that for the STBC (\ref{eqn:new}) with PIC group decoding the second condition in \emph{Proposition \ref{prop1}}  is satisfied when there is only one receive antenna.
For $N>1$, the equivalent channel matrix will be a stacked matrix of (\ref{eqn:HH}) with the  number of columns unchanged. It is easy to see that when there are multiple receive antennas, the second condition of \emph{Proposition \ref{prop1}} is also satisfied. The proof of \emph{Theorem \ref{theorem2}} is completed.
\end{proof}

\subsection{Achieving Full Diversity with PIC-SIC Group Decoding for
Any $P$}

For the proposed STBC (\ref{eqn:new}) with any number of layers and the PIC-SIC group decoding we have the following results.

 \begin{theorem}\label{theorem3}
 Consider a MIMO system with $M$ transmit antennas and $N$ receive antennas over block fading channels. The  STBC  as described in (\ref{eqn:new}) with $P$ diagonal layers is used at the transmitter. The equivalent channel matrix is $\mathcal{H}\in \mathcal{C}^{TN\times MP}$. If the received signal is decoded using the PIC-SIC group decoding with the grouping scheme $\mathcal{I}={\{\mathcal{I}_1, \mathcal{I}_2, \cdots,  \mathcal{I}_{2P}\}}$ and with the \emph{sequential order} $\{1,2,\cdots, 2P\}$, where $\mathcal{I}_p=\{(p-1)M/2+1,\ldots,pM/2\}$ for $p=1,2,\cdots,2P$, i.e., the size of each group is equal to  $M/2$, then the code $\mathbf{\mathbf{B}}_{M,T,P}$ achieves the full diversity. The code rate of the full-diversity STBC can be up to $M/2$ symbols per channel use.
 \end{theorem}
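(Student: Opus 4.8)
The strategy is to verify the two conditions of \emph{Proposition \ref{prop2}}. The first one --- full rank under ML decoding --- is nothing but \emph{Theorem \ref{theorem1}}, which was established for an arbitrary number of layers $P$, so no new work is needed there. Everything therefore reduces to the second condition for the sequential order $\{1,2,\dots,2P\}$: at the $k$th stage, for $k=1,\dots,2P$, any nonzero combination over $\Delta\mathcal{A}$ of the columns of $\mathbf{G}_k$ must stay outside the span of the columns of the still-undecoded groups $\mathbf{G}_{k+1},\dots,\mathbf{G}_{2P}$ whenever $\mathbf{H}\neq\mathbf{0}$. Exactly as in \emph{Theorem \ref{theorem2}}, it suffices to treat $N=1$, since for $N>1$ the equivalent channel matrix just stacks $N$ copies with the column partition --- hence the grouping --- untouched.

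For $N=1$ I would start from the equivalent channel $\mathcal{H}$ of \emph{Lemma 1} and apply the same row/column permutations as in the proof of \emph{Theorem \ref{theorem2}} to obtain the block-banded form, now with $P$ block columns $[\,\mathbf{G}_p'\ \mathbf{G}_{P+p}'\,]$, $p=1,\dots,P$, each a one-block-row downward shift of the stack of the $2\times M$ Alamouti-type blocks $\mathcal{F}_1,\dots,\mathcal{F}_{M/2}$ (with $\mathbf{f}_i=h_i\mathbf{\Upsilon}_i$) already used there. Under this permutation the sequential order first exhausts the left halves $\mathbf{G}_1',\dots,\mathbf{G}_P'$ (the diagonal layers of $\mathbf{C}^1$) and only then the right halves $\mathbf{G}_{P+1}',\dots,\mathbf{G}_{2P}'$ (the layers of $\mathbf{C}^2$); since permutations do not affect linear (in)dependence, it is enough to argue with the $\mathbf{G}_m'$.

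The core of the argument is a stage-by-stage computation that exploits this banded structure. Fixing $\mathbf{h}\neq\mathbf{0}$, let $q$ be the smallest index with $h_q\neq0$; a short case split (according to whether $h_q$ lies in $\mathbf{h}_1$ or $\mathbf{h}_2$) locates the topmost nonzero block row of every column block. For a stage $k\le P$, the still-undecoded left-half groups $\mathbf{G}_{k+1}',\dots,\mathbf{G}_P'$ have their leading nonzero block rows strictly below that of the current $\mathbf{G}_k'$ and therefore contribute nothing at the critical block row, while the still-undecoded right-half groups that do share support there are eliminated by invoking that the relevant $2\times2$ Alamouti sub-blocks have determinant proportional to a sum $|h_\cdot|^2+|h_\cdot|^2>0$ --- the same mechanism as in the rank computation of \emph{Theorem \ref{theorem1}}. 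Substituting the assumed identity $\sum_i a_i\mathbf{g}_i=\sum_j c_j\mathbf{g}_j$ and reading off that block row then forces $h_q\sum_l a_l\theta_{j,l}=0$ for some row index $j$ of $\mathbf{\Upsilon}$ and $\mathbf{a}\in(\Delta\mathcal{A})^{M/2}$ not all zero, contradicting the signal-space-diversity property of the cyclotomic rotation $\mathbf{\Theta}$ in (\ref{eqn:cyclo}) \cite{WangGY}. The stages $k>P$ decode the layers of $\mathbf{C}^2$ only against the remaining layers of $\mathbf{C}^2$ (no $\mathbf{C}^1$ layer is left), so they reduce to the identical mechanism on the lower Alamouti half and are if anything easier, the undecoded set being smaller. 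I expect the main obstacle to be precisely this critical-block-row bookkeeping: unlike the $P=2$ case of \emph{Theorem \ref{theorem2}}, for general $P$ several $\mathbf{C}^2$-layer groups overlap in support with the current $\mathbf{C}^1$-layer group, and cleanly disentangling them forces one to combine the banded (Toeplitz-like) column offsets, the Alamouti pairing of $\mathbf{C}^1$ and $\mathbf{C}^2$, and the zero pattern of $\mathbf{h}$ all at once. Once the non-containment is in place, the rate claim is immediate from the Code Rate remark, $R=MP/(2P+M-2)\to M/2$ as $P\to\infty$.
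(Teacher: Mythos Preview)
Your proposal is correct and follows essentially the same approach as the paper. The paper's own ``proof'' of Theorem~\ref{theorem3} is only a one-paragraph pointer: it says that the equivalent channel $\mathcal{H}$ from \emph{Lemma~1} can be rearranged into a form analogous to (\ref{eqn:HHH}) with more block columns, and that one then checks the criterion of \emph{Proposition~\ref{prop2}} by imitating the $P=2$ argument of \emph{Theorem~\ref{theorem2}}, with the details omitted. Your outline --- reduce to $N=1$, pass to the permuted block-banded form with $P$ shifted copies of the stack $\mathcal{F}_1,\dots,\mathcal{F}_{M/2}$, locate the critical block row via the minimal nonzero channel index, and derive the contradiction $h_q\sum_l a_l\theta_{j,l}=0$ against the cyclotomic rotation --- is exactly that imitation, and in fact supplies more of the bookkeeping than the paper does (in particular your explicit identification of which undecoded groups overlap in support at a given SIC stage, and the observation that stages $k>P$ are easier because only $\mathbf{C}^2$-layers remain).
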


The proof is similar to that of \emph{Theorem \ref{theorem2}}.  Note that  $\mathcal{H}$ for the code $\mathbf{B}_{M,T,P}$ in \emph{Lemma 1} can be written as an alternative form similar to the one in (\ref{eqn:HHH}) except the expansion of column dimensions. It is then not hard to follow the  proof for the case of $P=2$ in Section III-C to prove \emph{Theorem \ref{theorem3}} by showing that the criterion in
{\em Proposition \ref{prop2}} is satisfied.
The detailed proof is omitted.

\section{CODE DESIGN EXAMPLES}\label{sec:new}
In this section, we show a few examples of the proposed STBC given in (\ref{eqn:new}).

\subsection{For Four Transmit Antennas}

For $M=4$ and $T=6$. According to the code design in (\ref{eqn:new}), we have
\begin{equation}\label{eqn:F4}
\mathbf{\mathbf{B}}_{4,6,2} = \left[\begin{array}{cc}
    \mathbf{C}^1_{2,3,2} & \mathbf{C}^2_{2,3,2} \\
    -(\mathbf{C}^2_{2,3,2})^\ast & (\mathbf{C}^1_{2,3,2})^\ast\\
  \end{array}
  \right],
  \end{equation}
  where
\begin{equation*}
\mathbf{C}^1_{2,3,2} = \left[\begin{array}{cc}
    X_{1,1} & 0 \\
    X_{2,1} & X_{1,2}\\
    0 & X_{2,2}
  \end{array}
  \right],
  {\rm{~and~}}
  \mathbf{C}^2_{2,3,2} = \left[\begin{array}{cc}
    X_{3,1} & 0 \\
    X_{4,1} & X_{3,2}\\
    0 & X_{4,2}
  \end{array}
  \right].
 \end{equation*}

The code rate of $\mathbf{\mathbf{B}}_{4,6,2}$ is $4/3$.

The equivalent channel matrix of the code $\mathbf{\mathbf{B}}_{4,6,2}$ is
\begin{equation}\label{eqn:H4}
 \left[\begin{array}{cccccccc}
\gamma h_1 & \delta h_1 & 0 & 0 & \gamma h_3 & \delta h_3 & 0 & 0\\
-\delta h_2 & \gamma h_2 & \gamma h_1 & \delta h_1 & -\delta h_4 & \gamma h_4 & \gamma h_3 & \delta h_3 \\
0 & 0 & -\delta h_2 & \gamma h_2 & 0 & 0 & -\delta h_4 & \gamma h_4\\
-\gamma h_3^\ast & -\delta h_3^\ast & 0 & 0 & \gamma h_1^\ast & \delta h_1^\ast  & 0 & 0\\
\delta h_4^\ast & -\gamma h_4^\ast & -\gamma h_3^\ast & -\delta h_3^\ast & -\delta h_2^\ast & \gamma h_2^\ast & \gamma h_1^\ast & \delta h_1^\ast\\
0& 0& \delta h_4^\ast & -\gamma h_4^\ast& 0 & 0& -\delta h_2^\ast & \gamma h_2^\ast
\end{array}
 \right],
  \end{equation}
where  $\gamma=\cos \theta$ and $\delta=\sin \theta$ with $\theta=1.02$ \cite{Guo}.

To achieve the full diversity, the grouping scheme for the PIC group decoding of the code $\mathbf{\mathbf{B}}_{4,6,2}$ is $\mathcal{I}_1=\{1,2\}$, $\mathcal{I}_2=\{3,4\}$, $\mathcal{I}_3=\{5,6\}$, and $\mathcal{I}_4=\{7,8\}$.

\subsection{For Eight Transmit Antennas}

For given $T=10$, the code achieving full diversity with PIC group decoding can be designed as follows,
 \begin{equation}\label{eqn:F8}
\mathbf{\mathbf{B}}_{8,10,2} = \left[\begin{array}{cc}
    \mathbf{C}^1_{4,5,2} & \mathbf{C}^2_{4,5,2} \\
    -(\mathbf{C}^2_{4,5,2})^\ast & (\mathbf{C}^1_{4,5,2})^\ast\\
  \end{array}
  \right],
  \end{equation}
  where
\begin{eqnarray}
\mathbf{C}^1_{4,5,2} &=& \left[\begin{array}{cccc}
                     X_{1,1} &    &    &   \\
                    X_{2,1}& X_{1,2} &    &   \\
                     & X_{2,2} &  X_{1,3} &   \\
                      &  & X_{2,3} & X_{1,4}  \\
                       &   &   & X_{2,4} \\
                  \end{array}
 \right]\\
\mathbf{C}^2_{4,5,2} &=& \left[\begin{array}{cccc}
                     X_{3,1} &    &    &   \\
                    X_{4,1}& X_{3,2} &    &   \\
                     & X_{4,2} &  X_{3,3} &   \\
                      &  & X_{4,3} & X_{3,4}  \\
                       &   &   & X_{4,4} \\
                  \end{array}
 \right].
 \end{eqnarray}

The code rate of $\mathbf{\mathbf{B}}_{8,10,2}$ is $8/5$.

The equivalent channel of the code $\mathbf{\mathbf{B}}_{8,10,2}$ is
\begin{eqnarray}\label{eqn:H8}
\mathcal{H}=\left[\begin{array}{cccc}
              \mathbf{f}_1 & \mathbf{0} & \mathbf{f}_5 & \mathbf{0} \\
              \mathbf{f}_2 & \mathbf{f}_1 & \mathbf{f}_6 & \mathbf{f}_5 \\
              \mathbf{f}_3 & \mathbf{f}_2 & \mathbf{f}_7 & \mathbf{f}_6 \\
              \mathbf{f}_4 & \mathbf{f}_3 & \mathbf{f}_8 & \mathbf{f}_7 \\
              \mathbf{0} & \mathbf{f}_4 & \mathbf{0} & \mathbf{f}_8 \\
               -\mathbf{f}_5^* & \mathbf{0} & \mathbf{f}_1^* & \mathbf{0}\\
              -\mathbf{f}_6^* & -\mathbf{f}_5^* & \mathbf{f}_2^* & \mathbf{f}_1^* \\
              -\mathbf{f}_7^* & -\mathbf{f}_6^*  & \mathbf{f}_3^* & \mathbf{f}_2^* \\
              -\mathbf{f}_8^* & -\mathbf{f}_7^*  & \mathbf{f}_4^* & \mathbf{f}_3^*  \\
              \mathbf{0} & -\mathbf{f}_8^* &  \mathbf{0} & \mathbf{f}_4^*
            \end{array}\right],
\end{eqnarray}
where $\mathbf{0}=\mathbf{0}_{1\times 4}$ and $\mathbf{f}_i=h_i\mathbf{\Upsilon}_i$
 with $\mathbf{\Upsilon}_i$ being the $i$th row of the matrix $\mathbf{\Upsilon}=\left[ \begin{array}{cc}
                                                                                          1 & 1
                                                                                        \end{array}
 \right]^t\otimes\mathbf{ \Theta}$ for $i=1,2,\cdots, 8$ and $\mathbf{\Theta}$ being a rotation matrix of $4$-by-$4$.

 To achieve the full diversity, the grouping scheme for the PIC group decoding of the code $\mathbf{\mathbf{B}}_{8,10,2}$  is $\mathcal{I}_1=\{1,2,3,4\}$, $\mathcal{I}_2=\{5,6,7,8\}$, $\mathcal{I}_3=\{9,10,11,12\}$, and $\mathcal{I}_4=\{13,14,15,16\}$.

Table I  shows the decoding complexity of the new codes compared with the codes in \cite{ZhangGC09}. It can be seen that the proposed STBC in (\ref{eqn:new}) have more groups than the STBC in \cite{ZhangGC09}, but each group has half number of symbols to be jointly coded. Therefore, the PIC group decoding complexity is reduced by half. This mainly attributes to the introduction of Alamouti code structure into the code design in (\ref{eqn:new}) and the group orthogonality in the code matrix can  reduce the decoding complexity without sacrificing any performance benefits.

\begin{table}[t]
\begin{center}\caption{Comparison in PIC Group Decoding Complexity}
\renewcommand{\arraystretch}{1.5}
\label{1} \centering
\begin{tabular}{c c c c c}
\hline \hline
\bfseries Code   & \bfseries  Groups & \bfseries Symbols/Group & \bfseries Decoding Complexity\\
\hline
$\mathbf{C}_{4,5,2}/\mathbf{C}_{4,6,2}$ \cite{ZhangGC09}   & $2$ & $4$ & $2{|\mathcal{A}|^4}$\\
$\mathbf{\mathbf{B}}_{4,6,2}$   & $4$ & $2$ & $4{|\mathcal{A}|^2}$\\
\hline
$\mathbf{C}_{8,9,2}/\mathbf{C}_{8,10,2}$ \cite{ZhangGC09}   & $2$ & $8$ & $2{|\mathcal{A}|^8}$\\
$\mathbf{\mathbf{B}}_{8,10,2}$   & $4$ & $4$ & $4{|\mathcal{A}|^4}$\\
\hline\hline
\end{tabular}
\end{center}
\end{table}

 \section{Simulation Results}
In this section, we present some simulation results of the proposed STBC and compare it to the existing codes.  Flat MIMO Rayleigh fading channels are considered.

Fig. 1 shows the bit error rate (BER)  simulation results   of the proposed code $\mathbf{\mathbf{B}}_{4,6,2}$ in (\ref{eqn:F4}) with ZF, BLAST \cite{Foschini} detection, PIC group decoding and ML detection in a $4\times 4$ system, respectively.   The signal modulation is 16QAM and then the bandwidth efficiency is $8$ bps/Hz. It is clearly that the ML decoding achieves the best BER performance. The PIC group decoding can obtain the full diversity as the ML decoding does, but it suffers a loss of coding gain around 1 dB. Neither ZF nor BLAST detection method  can  obtain the full diversity.

Fig. 2 illustrates the performance comparison among the proposed code $\mathbf{\mathbf{B}}_{4,6,2}$, the TAST code \cite{gamal}, and the perfect ST code \cite{oggier} for a $4\times 4$ system. It is known that both TAST and perfect ST codes can obtain full diversity and full rate (rate-$M$) for MIMO systems, but they are designed based on the ML decoding whose complexity is high, i.e., a joint $M^2$-symbol ML decoding.   When PIC group decoding is applied, it is seen from Fig. 2 that both TAST code and perfect ST code lose the diversity gain. For the proposed code $\mathbf{\mathbf{B}}_{4,6,2}$, the full diversity can be obtained for both ML and PIC group decoding. It should be mentioned that the code $\mathbf{\mathbf{B}}_{4,6,2}$ with PIC group decoding achieves the full diversity with a very low decoding complexity, i.e., a double-symbol ML decoding, much lower than that of TAST and perfect ST codes (a joint 16-symbol ML decoding).

Moreover, the proposed code  is compared with other code designs based on PIC group decoding such as codes $\mathbf{C}_{4,6,2}$ and $\mathbf{C}_{4,5,2}$  in \cite{ZhangGC09}, and Guo-Xia's code in \cite[Eq. (40)]{Guo}. Fig. 3 shows that  the BER performance of the codes with the PIC group decoding for $4$ transmit and $4$ receive antennas over Rayleigh block fading channels.  It can be seen that the codes $\mathbf{C}_{4,6,2}$ in \cite{ZhangGC09}, Guo-Xia's code and the code $\mathbf{\mathbf{B}}_{4,6,2}$ all obtain full diversity and have  very similar performance. It should be mentioned that the new  code $\mathbf{\mathbf{B}}_{4,6,2}$ only has half decoding complexity of the code  $\mathbf{C}_{4,6,2}$ in \cite{ZhangGC09}.  Guo-Xia's code is indeed a case of the systematic design of the new code in (\ref{eqn:new}). This can be seen from the equivalent channel matrix of Guo-Xia's code shown in \cite[Eq. (41)]{Guo} and that of the code  $\mathbf{B}_{4,6,2}$ in (\ref{eqn:H4}). Moreover, the code  $\mathbf{C}_{4,5,2}$ in \cite{ZhangGC09} has $1$ dB loss compared to the code $\mathbf{\mathbf{B}}_{4,6,2}$ due to its high bandwidth efficiency.

\begin{figure}[t]
\centering
\includegraphics[width=3.4in]{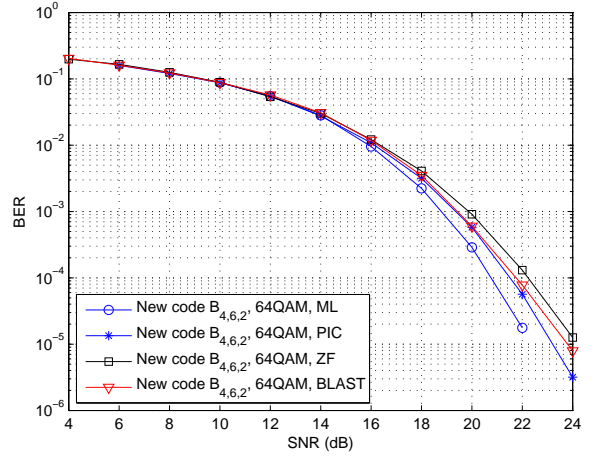}
\caption{Performance comparison of various detection methods (ZF, BLAST, ML and PIC) for the code $\mathbf{B}_{4,6,2}$  in a MIMO system with  $4$ transmit antennas and $4$ receive antennas at $8$ bps/Hz.} \label{fig:1}
\end{figure}

\begin{figure}[t]
\centering
\includegraphics[width=3.4in]{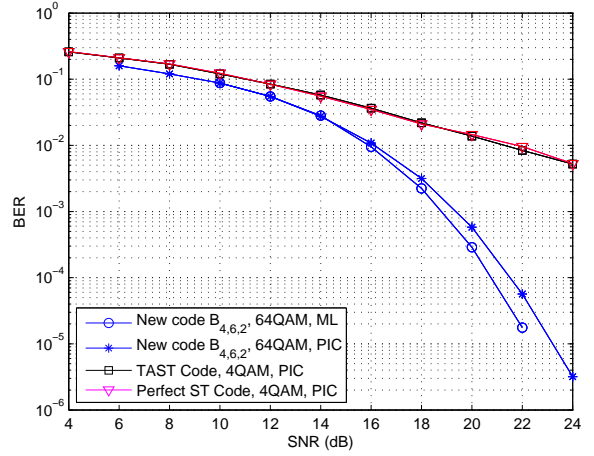}
\caption{Performance comparison among the proposed code $\mathbf{B}_{4,6,2}$, TAST code and perfect ST code for a MIMO system with  $4$ transmit antennas and $4$ receive antennas at $8$ bps/Hz.} \label{fig:2}
\end{figure}

\begin{figure}[t]
\centering
\includegraphics[width=3.4in]{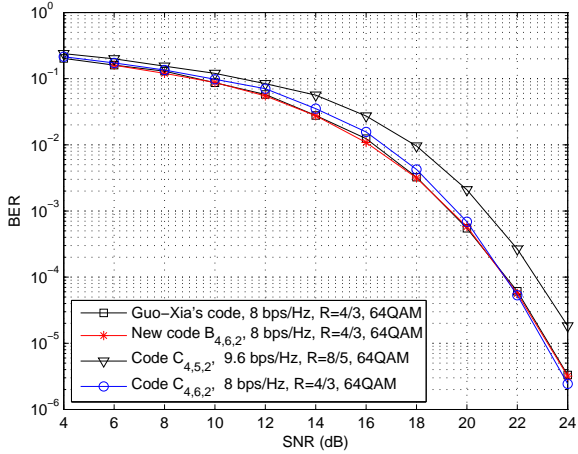}
\caption{BER performance of various codes with PIC group decoding for a MIMO system with  4 transmit antennas and 4 receive antennas.} \label{fig:ISIT2010}
\end{figure}

 \section{Conclusion}

In this paper, a design of full-diversity  STBC with reduced-complexity PIC group decoding was proposed. The proposed code design can obtain full diversity under  both ML decoding and PIC group decoding. Moreover, the decoding complexity of the full diversity STBC is equivalent to a joint ML decoding of $M/2$ symbols for $M$ transmit antennas while the rate can be up to $2$ symbols per channel use. For example, in a MIMO system with $4$ transmit antennas the full diversity can be achieved by the proposed codes with double-symbol ML decoding complexity and a code rate of $4/3$. Simulation results were shown to validate that the proposed codes achieve the full diversity gain with a low complexity decoding.   Although in this paper only Alamouti code is used in our new design, the method can be generalized to a general OSTBC.

\section*{APPENDIX   - Proof of Lemma 1}


 For MISO,  the received signal is given by
$\mathbf{Y}= \sqrt{\rho/\mu}\mathbf{\mathbf{B}}\mathbf{h}+\mathbf{W}$.
Considering the code structure in (\ref{eqn:new}), we can rewrite $\mathbf{Y}$ in the matrix form as follows,
\begin{equation}
\left[\begin{array}{c}
\mathbf{y}_{1}\\
\mathbf{y}_{2}
\end{array}
\right]=
\sqrt{\frac{\rho}{\mu}}
\left[\begin{array}{cc}
\mathbf{C}^{1} & \mathbf{C}^{2}\\
-\left(\mathbf{C}^{2}\right)^\ast & \left(\mathbf{C^{1}}\right)^\ast\\
\end{array}\right]
\left[\begin{array}{c}
\mathbf{h}_{1}\\
\mathbf{h}_{2}
\end{array}\right]
+\left[\begin{array}{c}
\mathbf{w}_{1}\\
\mathbf{w}_{2}
\end{array}
\right],
\end{equation}
where $\mathbf{y}_1,\mathbf{y}_2\in \mathbb{C}^{T/2 \times 1}$, $\mathbf{w}_1,\mathbf{w}_2\in \mathbb{C}^{T/2 \times 1}$ and $\mathbf{h}_1,\mathbf{h}_2\in \mathbb{C}^{M/2 \times 1}$. We further have
\begin{equation}
\begin{split}
\mathbf{y}_1=& \sqrt{\rho/\mu}\left(\mathbf{C}^1\mathbf{h}_1+\mathbf{C}^2\mathbf{h}_2\right)+\mathbf{w}_1,\\
\mathbf{y}_2=& \sqrt{\rho/\mu}\left(-(\mathbf{C}^2)^\ast\mathbf{h}_1+(\mathbf{C}^1)^\ast\mathbf{h}_2\right)+\mathbf{w}_2.
\end{split}
\end{equation}

With the expansion of code matrix $\mathbf{C}^1$ and $\mathbf{C}^2$ shown in (\ref{eqn:wei}), we can rewrite $\mathbf{y}_1$ and $\mathbf{y}_2$ as
\begin{equation}
\begin{split}
\mathbf{y}_1=& \sqrt{\rho/\mu}\left(\sum^P_{p=1}\mathbf{C}^1_p\mathbf{h}_1+\sum^P_{p=1}\mathbf{C}^2_p\mathbf{h}_2\right)+\mathbf{w}_1,\\
\mathbf{y}_2=& \sqrt{\rho/\mu}\left(\sum^P_{p=1}-(\mathbf{C}^2_p)^\ast\mathbf{h}_1+\sum^P_{p=1}(\mathbf{C}^1_p)^\ast\mathbf{h}_2\right)+\mathbf{w}_2,
\end{split}
\end{equation}
where
\begin{eqnarray}
\mathbf{C}^i_p=&\left[\begin{array}{c}
      \mathbf{0}_{(p-1)\times (M/2)}\\
      \mathrm{diag}(\mathbf{X}^i_p)\\
      \mathbf{0}_{(P-p)\times (M/2)}
      \end{array}
  \right],\,\, p=1,2,\cdots,P;\,\, i=1,2.
  \end{eqnarray}

Equivalently, we have
\begin{equation}
\begin{split}
\mathbf{y}_1=& \sqrt{\rho/\mu}\left(\sum^P_{p=1}\mathcal{H}_{1,p}\mathbf{X}^1_p+\sum^P_{p=1}\mathcal{H}_{2,p}\mathbf{X}^2_p\right)+\mathbf{w}_1,\\
-(\mathbf{y}_2)^\ast=& \sqrt{\rho/\mu}\left(\sum^P_{p=1}(\mathcal{H}_{1,p})^\ast\mathbf{X}^2_p+\sum^P_{p=1}-(\mathcal{H}_{2,p})^\ast\mathbf{X}^1_p\right)+(-\mathbf{w}_2)^\ast,
\end{split}
\end{equation}
where
\begin{eqnarray}
\mathcal{H}_{i,p}= \left[\begin{array}{c}
      \mathbf{0}_{(p-1)\times (M/2)}\\
      \mathrm{diag}(\mathbf{h}_i)\\
      \mathbf{0}_{(P-p)\times (M/2)}
      \end{array}
  \right],\,\, p=1,2,\cdots,P;\,\, i=1,2.
  \end{eqnarray}

Using   $\mathbf{X}^i_p=\mathbf{\Theta}\mathbf{s}^i_p$ shown in (\ref{eqn:RM}), we then have
\begin{equation}
\begin{split}
\mathbf{y}_1=& \sqrt{\rho/\mu}\left(\sum^P_{p=1}\mathcal{H}_{1,p}\mathbf{\Theta}\mathbf{s}^1_p+\sum^P_{p=1}\mathcal{H}_{2,p}\mathbf{\Theta}\mathbf{s}^2_p\right)+\mathbf{w}_1\\
=&\sqrt{\rho/\mu}\left(\mathcal{H}_1\mathbf{s}^1+\mathcal{H}_2\mathbf{s}^2\right)+\mathbf{w}_1,\\
-(\mathbf{y}_2)^\ast=& \sqrt{\rho/\mu}\left(\sum^P_{p=1}(\mathcal{H}_{1,p})^\ast\mathbf{\Theta}\mathbf{s}^2_p+\sum^P_{p=1}(-\mathcal{H}_{2,p})^\ast\mathbf{\Theta}\mathbf{s}^1_p\right)-(\mathbf{w}_2)^\ast\\
=&\sqrt{\rho/\mu}\left((\mathcal{H}_1)^\ast\mathbf{s}^2+(-\mathcal{H}_2)^\ast\mathbf{s}^1\right)-(\mathbf{w}_2)^\ast.
\end{split}
\end{equation}
Therefore, 
\begin{eqnarray}
  \left[\begin{array}{c}
\mathbf{y}_{1}\\
-(\mathbf{y}_{2})^\ast
\end{array}
\right] &
=&\sqrt{\frac{\rho}{\mu}}
\left[\begin{array}{cc}
\mathcal{H}_{1} & \mathcal{H}_{2}\\
-(\mathcal{H}_{2})^\ast & (\mathcal{H}_{1})^\ast
\end{array}\right]
\left[\begin{array}{c}
\mathbf{s}^{1}\\
\mathbf{s}^{2}
\end{array}\right]
+\left[\begin{array}{c}
\mathbf{w}_{1}\\
-\mathbf{w}_{2}^*
\end{array}
\right]\nonumber
\\
&=&\sqrt{{\rho}/{\mu}}\mathcal{H}\mathbf{s}+\mathbf{w}^{'},
\end{eqnarray}
where the equivalent channel matrix $\mathcal{H} \in \mathbf{C}^{T \times MP}$ is given by
 \begin{eqnarray}
 \left[\begin{array}{cccccccc}
                     \mathcal{G}^1_1 & \mathcal{G}^1_2 & \cdots & \mathcal{G}^1_P & \mathcal{G}^2_1 & \mathcal{G}^2_2 & \ldots & \mathcal{G}^2_P \\
                      -(\mathcal{G}^2_1)^\ast & -(\mathcal{G}^2_2)^\ast & \cdots & -(\mathcal{G}^2_P)^\ast & (\mathcal{G}^1_1)^\ast & (\mathcal{G}^1_2)^\ast & \cdots & (\mathcal{G}^1_P)^\ast
                  \end{array}
                 \right] ,\nonumber
 \end{eqnarray}
with
  \begin{eqnarray}
 \mathcal{G}^i_p = \left[\begin{array}{c}
                     \mathbf{0}_{(p-1)\times M/2} \\
                     \mathrm{diag}(\mathbf{h}_i)\mathbf{\Theta} \\
                      \mathbf{0}_{(P-p)\times M/2}
                  \end{array}
                 \right], i=1,2;\,\,\,p=1,2,\cdots,P.
 \end{eqnarray}


\begin{thebibliography}{1}

{

\bibitem{Telatar} E. Telatar, ``Capacity of multi-antenna Gaussian channels,''  {\em Europ. Trans. Telecommun.}, vol. 10, pp. 585--595, Nov. 1999.

 \bibitem{Naguib} A. F. Naguib, N. Seshadri, and A. R. Calderbank, ``Increasing data rate over wireless channels,'' {\em IEEE Signal Processing Mag.}, vol. 48, pp. 76--92, May 2000.

 \bibitem{ZhangMag} W. Zhang, X.-G. Xia, and K. B. Letaief, ``Space-time/frequency coding for
 MIMO-OFDM in next generation broadband wireless systems,''   {\em  IEEE Wireless Commun. Mag.}, vol. 14, no. 3, pp. 32--43, June 2007.


\bibitem{Alamouti} S. M. Alamouti, ``A simple transmit diversity technique for wireless communication,''
{\em IEEE J. Sel. Areas Commun.}, vol. 16, pp. 1451--1458, Oct. 1998.

\bibitem{Tarokh98} V. Tarokh, N. Seshadri, and A. Calderbank, ``Space-time codes for high data
rate wireless communications: Performance criterion and code construction,''
{\em IEEE Trans. Inf. Theory}, vol. 44, pp. 744--765, Mar. 1998.

\bibitem{Tarokh00} V. Tarokh, H. Jafarkhani, and A. R. Calderbank, ``Space-time block codes from orthogonal designs,'' {\em
IEEE Trans. Inf. Theory}, vol. 45, pp. 1456--1467, July 1999. Also, ``Corrections to
 `Space-time block codes from orthogonal designs','' {\em IEEE Trans. Inf. Theory}, vol.
 46, p. 314, Jan. 2000.









 \bibitem{Lu} K. Lu, S. Fu, and X.-G. Xia, ``Closed form designs of complex orthogonal space-time block codes of rates
 $(k+1)/(2k)$ for $2k-1$ or $2k$ transmit antennas,''   {\em IEEE Trans. Inf. Theory},
 vol. 51, pp. 4340--4347, Dec. 2005.

\bibitem{Wang} H. Wang and X.-G Xia, ``Upper bounds of rates of complex orthogonal space-time block codes,'' {\em IEEE Trans. Inf. Theory}, vol 49, pp. 2788--2796, Oct. 2003.



\bibitem{Jafar} H. Jafarkhani, ``A quasi-orthogonal space-time block code,'' {\em IEEE Trans. Commun.}, vol. 49, pp. 1--4, Jan. 2001.

\bibitem{Tirkkonen}  O. Tirkkonen, A. Boariu, and A. Hottinen, ``Minimal non-orthogonality rate $1$ space-time block code for $3+$ Tx antennas,'' in {\em Proc. IEEE 6th Int. Symp. on Spread-Spectrum Tech. and Appl.
(ISSSTA 2000)}, Sep. 2000, pp. 429--432.

\bibitem{Papadias}C. B. Papadias and G. J. Foschini, ``Capacity-approaching space-time codes for systems employing four transmit antennas,'' {\em IEEE Trans. Inf. Theory}, vol. 49, pp. 726--733, Mar. 2003.


\bibitem{Tirkkonen2}  O. Tirkkonen, ``Optimizing space-time block codes by constellation rotations,'' {\em Proc. Finnish Wireless Commun. Workshop}, Finland, pp. 59--60, Oct. 2001.

\bibitem{Sharma} N. Sharma and C. B. Papadias, ``Full-rate full-diversity linear quasi-orthogonal space-time codes for any number of transmit antennas,'' {\em EURASIP J. Applied Signal Processing}, vol. 9, pp. 1246--1256, Aug. 2004.

\bibitem{SuXia} W. Su and X.-G. Xia, ``Signal constellations for quasi-orthogonal
space-time block codes with full diversity,'' {\em IEEE Trans. Inf. Theory},
vol. 50, pp. 2331--2347, Oct. 2004.





 \bibitem{Khan}  Z. A. Khan and B. S. Rajan, ``Single-symbol maximum-likelihood decodable linear STBCs,'' {\em IEEE Trans. Inf. Theory}, vol. 52, pp. 2062--2091, May 2006.

\bibitem{Yuen} C. Yuen, Y. Guan, and T. T. Tjhung, ``Quasi-orthogonal STBC with minimum decoding complexity,'' {\em IEEE Trans. Wireless Commun.}, vol. 4, pp. 2089--2094, Sep. 2005.


 \bibitem{HWang} H. Wang, D. Wang, and X.-G. Xia, ``On optimal quasi-orthogonal space-time block codes with minimum decoding complexity,'' {\em IEEE Trans. Inf. Theory}, vol. 55,  pp. 1104--1130, Mar. 2009.



\bibitem{Dao} D. N. Dao, C. Yuen, C. Tellambura,   Y. L. Guan, and T. T. Tjhung,  	``Four-group decodable space-time block codes,''
 {\em IEEE Trans. Signal Processing}, vol. 56, pp. 424--430, Jan. 2008.



\bibitem{Karmakar} S. Karmakar and B. S. Rajan, ``Multigroup decodable STBCs from clifford algebras,'' {\em IEEE Trans. Inf. Theory}, vol. 55, pp. 223--231, Jan. 2009.





 \bibitem{Damen} M. O. Damen, A. Tewfik, and J. C. Belfiore, ``A construction of a space-time code based on
 number theory,'' {\em IEEE Trans. Inf. Theory}, vol. 48, pp. 753--760, Mar. 2002.



\bibitem{gamal}
H. El Gamal and M. O. Damen, ``Universal space-time coding,'' {\em IEEE Trans. Inf. Theory,} vol. 49, pp. 1097--1119, May 2003.

\bibitem{WangGY} G. Wang, H. Liao, H. Wang,  and X.-G. Xia, ``Systematic and optimal cyclotomic lattices and diagonal space-time block code designs,''
{\em IEEE Trans. Inf. Theory}, vol. 50, pp. 3348--3360, Dec. 2004.

\bibitem{WangGY2} G. Wang and X.-G. Xia, ``On optimal multilayer cyclotomoc space-time code designs,'' {\em IEEE Trans. Inf. Theory}, vol. 51, pp. 1102--1135, Mar. 2005.



\bibitem{rajan0}
B. A. Sethuraman, B. S. Rajan, and V. Shashidhar, ``Full-diversity, high-rate
  space-time block codes from division algebras,'' {\em IEEE Trans. Inf.
  Theory}, vol. 49,  pp. 2596--2616, Oct. 2003.





\bibitem{rajan1}
 Kiran T.  and B. S. Rajan, ``{STBC}-scheme with nonvanishing determinant for
  certain number of transmit antennas,''  {\em IEEE Trans. Inf. Theory},
  vol. 51,   pp. 2984--2992, Aug. 2005.


\bibitem{kumar1}
P. Elia, K. R. Kumar, S. A. Pawar, P. V. Kumar, and H.-F. Lu, ``Explicit
  space-time codes achieving the diversity-multiplexing gain tradeoff,''
  \emph{IEEE Trans. Inf. Theory}, vol. 52,   pp. 3869--3884, Sep. 2006.

 \bibitem{oggier} F. Oggier, G. Rekaya, J.-C. Belfiore, and E. Viterbo, ``Perfect space-time block codes,''
 {\em  IEEE Trans. Inf. Theory}, vol. 52, pp. 3885--3902, Sep. 2006.

\bibitem{kumar2} P. Elia, B. A. Sethuraman, and P. V. Kumar, ``Perfect
space-time codes for any number of antennas,'' {\em IEEE Trans. Inf. Theory}, vol. 53,  pp. 3853--3868, Nov. 2007.



\bibitem{oggier1}  F. Oggier, J.-C. Belfiore, and E. Viterbo, ``Cyclic division algebras: A tool for space-time coding,''
{\em Foundations and Trends in Communications and Information Theory}, vol. 4, no. 1, pp. 1--95.


\bibitem{Guo09} X. Guo and X.-G. Xia, ``An elementary condition for non-norm elements,'' {\em IEEE Trans. Inf. Theory}, vol. 55, pp. 1080-1085, Mar. 2009.









 \bibitem{Choi} X. H. Nguyen and J. Choi, ``Joint design of groupwise STBC and SIC based receiver,'' {\em IEEE Commun. Lett.}, vol. 12,
 pp. 115--117, Feb. 2008.

 \bibitem{Hong} E. Biglieri, Y. Hong, and E. Viterbo, ``On fast-decodable space-time block codes,'' {\em IEEE Trans. Inf. Theory}, vol. 55, pp. 524--530, Feb. 2009.

\bibitem{Calderbank}
P. Rabiei, N. Al-Dhahir, and R. Calderbank, ``New rate-2 STBC design for 2 TX with reduced-complexity maximum likelihood decoding,'' {\em IEEE Trans. Wireless Commun.}, vol. 8,  pp. 1803--1813, Apr. 2009.






\bibitem{Aria} A. Hedayat and A. Nosratinia, ``Outage and diversity of linear receivers in flat-fading MIMO channels,'' {\em IEEE Trans. Signal Processing}, vol. 55, pp. 5868--5873, Dec. 2007.


\bibitem{Caire} K. Raj Kumar, G. Caire, and A. L. Moustakas, ``Asymptotic performance of linear receivers in MIMO fading channels,'' {\em IEEE Trans. Inf. Theory}, vol. 55, pp. 4398--4418, Oct. 2009.


\bibitem{Tse} L. H. Grokop and D. N. C. Tse, ``Diversity-multiplexing tradeoff in ISI channels,'' {\em IEEE Trans. Inf. Theory}, vol. 55, pp. 109--135, Jan. 2009.

\bibitem{Aria2} A. Tajer and A. Nosratinia, ``Diversity order in ISI channels with single carrier frequency domain equalizers,''  {\em IEEE Trans. Wireless Commun.}, vol. 9, pp. 1022--1032, Mar. 2010.

 \bibitem{Jiang}
 Y. Jiang, M. K. Varanasi, and J. Li, ``Performance analysis of ZF and MMSE equalizers for MIMO systems: An in-depth study of the high SNR regime,''  {\em IEEE Trans. Inf. Theory}, to appear. \url{http://www.sal.ufl.edu/yjiang/papers/VbitwcR6.pdf}.




\bibitem{Liu} J. Liu, J.-K. Zhang, and K. M. Wong, ``Full-diversity codes for MISO systems equipped with linear or ML detectors,''
{\em IEEE Trans. Inf. Theory}, vol. 54,   pp. 4511--4527, Oct. 2008.

\bibitem{Shang} Y. Shang and X.-G. Xia, ``Space-time block codes achieving full diversity with linear receivers,''
{\em IEEE Trans. Inf. Theory}, vol. 54,   pp. 4528--4547, Oct. 2008.


\bibitem{ZhangICASSP09} W. Zhang and J. Yuan, ``A simple design of space-time block codes achieving full diversity with linear receivers,'' in {\em Proc. IEEE ICASSP},  Taipei, Apr. 20-24, 2009, pp. 2729--2732.


\bibitem{ZhangISIT09} W. Zhang and J. Yuan, ``Linear receiver based high-rate space-time block codes,'' in {\em Proc. IEEE ISIT},  Seoul, Korea, June 29--July 3, 2009, pp. 94--98.



\bibitem{WangHM} H. Wang, X.-G. Xia, Q. Yin, and B. Li, ``A family of space-time block codes achieving full diversity with linear receivers,'' {\em IEEE Trans. Commun.}, vol. 57, pp. 3607--3617, Dec. 2009.



\bibitem{Guo} X. Guo and X.-G. Xia, ``On full diversity space-time block codes with partial interference cancellation group decoding,'' {\em IEEE Trans. Inf. Theory}, vol. 55, pp. 4366--4385, Oct. 2009. Also, ``Corrections to `On full diversity space-time block codes with partial interference cancellation group decoding','' \url{http://www.ece.udel.edu/~xxia/correction_guo_xia.pdf}.



\bibitem{ZhangGC09}   W. Zhang, T. Xu, and X.-G. Xia, ``Two designs of  space-time block codes achieving full diversity with partial interference cancellation group decoding,'' {\em IEEE Trans. Inf. Theory}, submitted. Also, 
     \url{http://arxiv.org/abs/0904.1812v3}

\bibitem{Foschini} G. J. Foschini, ``Layered space-time architecture for wireless communication in a fading
environment when using multi-element antennas,'' {\em Bell Labs Tech. J.}, vol. 1, no. 2, pp.
41--59, 1996.


}
\end{thebibliography}
\end{document}